\providecommand{\customgenericname}{}
\newcommand{\newcustomtheorem}[2]{%
  \newenvironment{#1}[1]
  {%
   \renewcommand\customgenericname{#2}%
   \renewcommand\theinnercustomgeneric{##1}%
   \innercustomgeneric
  }
  {\endinnercustomgeneric}
}
\newtheorem*{theorem*}{Theorem}
\newtheorem*{lemma*}{Lemma}
\newtheorem*{corollary*}{Corollary}
\newcommand\curRevision{1}
\newcommand{\currentrevision}[1]{
    \def\curRevision{#1}
}
\newcommand{\revdtext}[2][1]{\ifthenelse{\equal{#1}{\curRevision}}{\textcolor{blue}{#2}}{#2}}
\newcommand{\TODO}{
    \textcolor{red}{\textbf{TODO}}
    \PackageWarning{libs}{There are unresolved TODOs}
}
\newcommand{\citemissing}{\textsuperscript{\textcolor{red}{[Citation needed]}}}
\newcommand{\figmissing}{\textsuperscript{\textcolor{red}{[Figure needed]}}}
\pgfplotsset{compat=1.18}
  \DeclarePairedDelimiterX\Set[1]\{\}{%
  
  #1
}
\newtheorem{theorem}{Theorem}
\newtheorem{corollary}[theorem]{Corollary}
\newtheorem{lemma}[theorem]{Lemma}
\newtheorem{definition}[theorem]{Definition}
\newcommand{\C}{\mathbb{C}}
\newcommand{\N}{\mathbb{N}}
\newcommand{\R}{\mathbb{R}}
\newcommand{\Z}{\mathbb{Z}}
\newcommand{\bigO}{\mathcal{O}}
\DeclarePairedDelimiterX\pBrackets[1]{[}{]}{%

#1
}
\newcommand{\Prob}[2][]{
 \def\tmp{#1}%
   \ifx\tmp{}
     \operatorname{Pr}\pBrackets*{#2}
   \else
     \operatorname{Pr}_{#1}\pBrackets*{#2}
   \fi
}
\newcommand{\Expect}[2][]{
 \def\tmp{#1}%
   \ifx\tmp{}
     \operatorname{\mathbb{E}}\pBrackets*{#2}
   \else
     \operatorname{\mathbb{E}}_{#1}\pBrackets*{#2}
   \fi
}
\renewcommand{\ket}[1]{| #1 \rangle}
\renewcommand{\bra}[1]{\langle #1 |}
\renewcommand{\braket}[2]{\langle #1 | #2 \rangle}
\renewcommand{\proj}[1]{\ket{#1}\bra{#1}}
\newcommand{\vspan}[1]{\operatorname{span}\{ #1 \}}
\newcommand{\diag}{\operatorname{diag}}
\newcommand{\rk}{\operatorname{rk}}
\newcommand{\id}{\mathbbm{1}}
\renewcommand{\tr}{\operatorname{tr}}
\newcommand{\sL}[2]{{\mathfrak{s l}(2, \mathbb{C})}}
\begin{document}

\title{Quantum signal processing over SU(N)}

\author{Lorenzo Laneve}
\email{lorenzo.laneve@usi.ch}
\affiliation{Faculty of Informatics — Universit\`a della Svizzera Italiana, 6900 Lugano, Switzerland}

\begin{abstract}
  Quantum signal processing (QSP) and the quantum singular value transformation (QSVT) are pivotal tools for simplifying the development of quantum algorithms. These techniques leverage polynomial transformations on the eigenvalues or singular values of block-encoded matrices, achieved with the use of just one control qubit. \revdtext{In contexts where the control qubit is used to extract information about the eigenvalues of singular values, the amount of extractable information is limited to one bit per protocol.} In this work, we extend the original QSP ansatz by introducing multiple control qubits. \revdtext{We show that, much like in the single-qubit case, nearly any vector of polynomials can be implemented with a multi-qubit QSP ansatz, and the gate complexity scales polynomially with the dimension of such states.} \revdtext{Moreover,} assuming that powers of the matrix to transform are easily implementable -- as in Shor's algorithm -- we can achieve polynomial transformations with degrees that scale exponentially with the number of control qubits. This work aims to provide a partial characterization of the polynomials that can be implemented using this approach, with \revdtext{phase estimation schemes} and discrete logarithm serving as illustrative examples.
\end{abstract}
\maketitle

\onecolumngrid

\noindent Quantum algorithmic design is more an art than a science. To this day, quantum procedures across the literature that can provably provide complexity advantage over classical models of computation usually involve the alternations of pairs of reflections (resembling Grover's search~\cite{groverFastQuantumMechanical1996, groverQuantumMechanicsHelps1997}) or the quantum Fourier transform to efficiently expose some cyclic structure (thus generalizing Shor's algorithm~\cite{shorPolynomialTimeAlgorithmsPrime1997}). Just like in classical computation, the development of a methodology (intended as a toolbox of generally-applicable techniques) is crucial to understand how and to what extent we can achieve speed-ups in new settings. Notable examples of such techniques, which sometimes also helped to develop further intuition about already-known algorithms, include quantum walks~\cite{kempeQuantumRandomWalks2003,szegedyQuantumSpeedupMarkov2004,ambainisQuantumWalkAlgorithm2004,magniezHittingTimesQuantum2012,laneveHittingTimesGeneral2023}, span programs~\cite{reichardtSpanprogrambasedQuantumAlgorithm2012,hoyerNegativeWeightsMake2007,cornelissenSpanProgramsQuantum2020}, learning graphs~\cite{belovsApplicationsAdversaryMethod2014}, quantum divide and conquer~\cite{childsQuantumDivideConquer2022}, and quantum signal processing~\cite{lowMethodologyResonantEquiangular2016,lowOptimalHamiltonianSimulation2017a,gilyenQuantumSingularValue2019}.

In particular, quantum signal processing (QSP) is a construction that arose from the context of composite pulses in nuclear magnetic resonance~\cite{vandersypenNMRTechniquesQuantum2005}. In this setting, a parameter $z$ characterizing some quantum process is embedded in a single-qubit operator and, by intertwining this operator with a sequence of gates (independent of $z$), we can induce a polynomial transformation on $z$, thus transforming also the structure of the original underlying process.
For example, if one takes $z$ to be the eigenvalues of a unitary matrix $U$ representing some algorithm, QSP allows to carry out a uniform polynomial transformation $P$ on all the eigenvalues of $U$, i.e.,
\begin{align}
    \label{eq:quantum-eigenvalue-transform-example}
    \ket{0} \ket{\psi} \mapsto \ket{0} P(U) \ket{\psi} + \ket{1} Q(U) \ket{\psi}
\end{align}
where $Q(U)$ must be chosen in such a way that $|P|^2 + |Q|^2 \equiv 1$, in order for the state to be normalized (post-selecting on $\ket{0}$ will then enact $P(U)$). The degree of $P$ is equal to the number of steps \revdtext{(and calls to $U$)} needed for its implementation, thus marking a clear connection with the required complexity.
With some more effort, one can also use QSP to carry out a transformation on the singular values of a matrix that is \emph{block-encoded} in a unitary, e.g., a non-necessarily-square matrix in the top-left corner of a unitary (the so-called \emph{quantum singular value transformation}~\cite{gilyenQuantumSingularValue2019,tangCSGuideQuantum2023}).

Quantum eigenvalue transformation initially turned out to be a useful tool for Hamiltonian simulation~\cite{lowHamiltonianSimulationQubitization2019, lowOptimalHamiltonianSimulation2017a}, where optimal algorithms were found with simple and intuitive ideas (later also improved in~\cite{martynEfficientFullyCoherentQuantum2023}). However, the true potential of quantum signal processing was discovered with the advent of the quantum singular value transformation~\cite{gilyenQuantumSingularValue2019} (hereinafter, QSVT), where other known algorithms were found to be rewritten in terms of simple applications of the main theorem, such as fixed-point amplitude amplification~\cite{brassardQuantumAmplitudeAmplification2002,yanFixedpointObliviousQuantum2022}, linear systems solving~\cite{harrowQuantumAlgorithmLinear2009} and quantum-state preparation~\cite{mcardleQuantumStatePreparation2022,laneveRobustBlackboxQuantumstate2023}. Other applications include a QSVT-based version of phase estimation and factoring~\cite{martynGrandUnificationQuantum2021}. These results show how the QSP framework gives life to a \emph{grand unification} of many known quantum algorithms. \revdtext{The powerfulness of this framework is also studied in~\cite{rossiModularQuantumSignal2023}, where a function-first theory for devising quantum algorithms is proposed on top of the QSP/QSVT framework.}

Equation~\ref{eq:quantum-eigenvalue-transform-example} can nonetheless be seen from a different perspective: instead of `hoping' that the control qubit collapses to $\ket{0}$ so that $\ket{\psi}$ will undergo a desired (non-unitary) transformation, we could design $P, Q$ in such a way that the control qubit is the one that actually carries out useful information (indeed, this is what is done in the QSVT-based phase estimation of~\cite{martynGrandUnificationQuantum2021}, when information about the phase is extracted one bit at a time). \revdtext{With this approach in mind, the single control qubit constitutes a limitation (of one bit) on the information we can extract from the eigenvalues or singular values, forcing us to compose multiple protocols (even in non-trivial ways) to extract more information, which in turn sacrifices the simplicity of the analysis and the algorithm.} Intuitively, if we allow $b > 1$ control qubits we should design not two polynomials, but \revdtext{up to $2^b$}, and we can use the idea of QSP to extract more than one bit of information at once. In this work we show that, much like in the single-qubit case, nearly any vector of $2^b$ polynomials can be implemented with a multi-qubit QSP ansatz\revdtext{, extending previous work~\cite{haahProductDecompositionPeriodic2019, dongGroundStatePreparationEnergy2022,wangQuantumPhaseProcessing2023, motlaghGeneralizedQuantumSignal2023} for $b > 2$ control qubits. We also show that the signal processing operators can be implemented efficiently, provided that their dimension is polynomial in the input size (i.e., the number of control qubits is logarithmic in the input size). As an example, we present phase location, a problem similar to phase estimation where the eigenphase has to be localized in one of $s$ arcs of the complex circle given on input. With a fairly simple analysis, we devise a procedure that returns the correct interval with high probability, which in uneven instances turns out to be more efficient than using phase estimation with sufficient accuracy.}

\revdtext{Moreover, in situations where the unitary $U$ to transform is such that the powers $\{ U^k \}_k$ are efficiently implementable (more than naively repeating $U$ $k$ times), one can use such powers to achieve polynomials of $U$ with higher degree in fewer steps. Although this cannot be always the case (a general scheme for efficiently implementing powers of a unitary is excluded by the so-called \emph{no-fast forwarding theorem}~\cite{atiaFastforwardingHamiltoniansExponentially2017}), there exist unitaries that satisfy this condition.}

As an example, we show how (a reinterpretation of) Shor's algorithm for the discrete logarithm~\cite{dewolfQuantumComputingLecture2019} emerges by itself from a special case of multi-qubit quantum signal processing, adding more algorithms under the QSP framework. \revdtext{However, we do not need this assumption for these results to be useful: we also derive the original phase estimation algorithm, as well as a more robust variant involving a gaussian window.}

We start in Section~\ref{sec:one-qubit} by reviewing a less-known variant of single-qubit QSP~\cite{motlaghGeneralizedQuantumSignal2023}, which will be simpler to understand for the multiple-qubit case, (yet equivalent to the more traditional pictures~\cite{martynGrandUnificationQuantum2021,haahProductDecompositionPeriodic2019}), and we provide a different proof for it. \revdtext{The results are then extended to multiple qubits in Section~\ref{sec:multi-qubit-qsp}, where we show the result for an arbitrary number of qubits and we apply it to solve phase location. Section~\ref{sec:exponential-steps} discusses the case where we have access to $U^k$, and gives a partial characterization for the achievable polynomials, namely in the case where the length $n = 1$, giving necessary and sufficient conditions, and using this result to (re-)derive phase estimation schemes as well as computing discrete logarithms. Section~\ref{sec:general-case} gives final remarks characterizations about the general case $n,b > 1$ with exponential step, as well as a general outline of the proofs. Appendix~\ref{apx:polynomial-unitary-approximation} shows how to efficiently implement a unitary $U$ of length $d$, scheme that is used to implement the signal processing operators whenever they are computed numerically, while Appendices~\ref{apx:triangular-matrices},~\ref{apx:qsp-multi-qubit-proof} give the formal arguments for all the results.}
\section{Review of single-qubit quantum signal processing}
\label{sec:one-qubit}
\noindent Here we show a slightly different formulation of quantum signal processing over $SU(2)$ (construction also proposed in~\cite{motlaghGeneralizedQuantumSignal2023}) that can be later extended to multiple qubits. We define the \emph{signal operator} $\Tilde{w}$ as
\begin{align*}
    \Tilde{w} =
    \begin{bmatrix}
        1 & 0 \\
        0 & z
    \end{bmatrix}.
\end{align*}
This simply represents a controlled unitary $cU = \proj{0} \otimes \id + \proj{1} \otimes U$, restricted to a generic eigenspace associated with the eigenvalue $z \in U(1)$. It is known that intertwining this signal operator with a sequence of $SU(2)$ elements (the so-called \emph{signal processing operators}) gives all the possible (analytic) polynomials.
\begin{theorem}[\cite{motlaghGeneralizedQuantumSignal2023}]
    \label{thm:single-qubit-qsp-analytic}
    Let $P, Q \in \C[z]$. There exist $A_0, \ldots, A_n \in SU(2)$ such that
    \begin{align*}
        A_n \Tilde{w} A_{n-1} \Tilde{w} \cdots \Tilde{w} A_0 \ket{0} =
        \begin{bmatrix}
            P(z) \\
            Q(z)
        \end{bmatrix}
    \end{align*}
    if and only if
    \begin{enumerate}[(i)]
        \item $\deg P, Q \le n$;
        \item $|P|^2 + |Q|^2 = 1$ for every $z \in U(1)$.
    \end{enumerate}
\end{theorem}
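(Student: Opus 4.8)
I would prove the two directions separately, starting with the easy \emph{necessity} direction. Since $\Tilde{w} = \diag(1,z)$ has entries that are polynomials of degree at most one and each $A_i \in SU(2)$ is a constant matrix, the product $A_n \Tilde{w} A_{n-1} \cdots \Tilde{w} A_0$ is a $2 \times 2$ matrix of polynomials in $z$; since each of the $n$ factors $\Tilde{w}$ contributes at most one power of $z$, every entry has degree at most $n$, and in particular the first column $(P,Q)^{\top}$ satisfies (i). For (ii), note that on $U(1)$ the matrix $\Tilde{w}$ is unitary, so the whole product lies in $SU(2)$ and maps the unit vector $\ket{0}$ to a unit vector; hence $|P|^2 + |Q|^2 = 1$ for all $z \in U(1)$.

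For \emph{sufficiency} I would induct on $n$. The base case $n = 0$ requires constants $P, Q$ with $|P|^2 + |Q|^2 = 1$, which are realized by $A_0 = \begin{bmatrix} P & -\bar{Q} \\ Q & \bar{P} \end{bmatrix} \in SU(2)$. For the inductive step the idea is to peel off the outermost layer: I look for $A_n = \begin{bmatrix} \alpha & \beta \\ -\bar{\beta} & \bar{\alpha} \end{bmatrix} \in SU(2)$ and polynomials $P', Q'$ of degree at most $n-1$ with $A_n^{-1} (P, Q)^{\top} = (P', zQ')^{\top}$, so that $\Tilde{w}(P',Q')^{\top} = A_n^{-1}(P,Q)^{\top}$. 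Using $A_n^{-1} = \begin{bmatrix} \bar{\alpha} & -\beta \\ \bar{\beta} & \alpha \end{bmatrix}$, this demands that the top entry $\bar{\alpha} P - \beta Q$ lose its degree-$n$ coefficient and that the bottom entry $\bar{\beta} P + \alpha Q$ lose its constant term (so it is divisible by $z$). Writing $p_n, q_n$ and $p_0, q_0$ for the leading and constant coefficients of $P, Q$, these are the two conditions $\bar{\alpha} p_n - \beta q_n = 0$ and $\bar{\beta} p_0 + \alpha q_0 = 0$.

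The crux — and the step I expect to be the main obstacle — is that a single unit vector $(\alpha, \beta)$ must satisfy both conditions at once. Assuming $\max(\deg P, \deg Q) = n$, the first condition forces $(\alpha, \beta) \propto (\bar{q}_n, p_n)$, so I would set $\alpha = \bar{q}_n/N$ and $\beta = p_n/N$ with $N = \sqrt{|p_n|^2 + |q_n|^2} \ne 0$. Substituting into the second condition reduces it to $\bar{p}_n p_0 + \bar{q}_n q_0 = 0$, which is the complex conjugate of the coefficient of $z^n$ in the Fourier expansion of $|P|^2 + |Q|^2$ on $U(1)$. Hypothesis (ii) forces that coefficient to vanish, so the two reductions are automatically compatible: the choice of $A_n$ that kills the leading term of the top component simultaneously makes the bottom component divisible by $z$. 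This is the place where condition (ii) does the essential work.

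It remains to verify that $(P', Q')$ satisfies the hypotheses at level $n-1$. Their degrees are at most $n-1$ by construction, and since $A_n^{-1}$ is unitary and $|z| = 1$ on $U(1)$ we get $|P'|^2 + |Q'|^2 = |P'|^2 + |zQ'|^2 = \| A_n^{-1}(P,Q)^{\top}\|^2 = |P|^2 + |Q|^2 = 1$, so (ii) is preserved. The induction hypothesis then supplies $A_0, \ldots, A_{n-1}$, and prepending $A_n$ completes the factorization. Finally I would treat the degenerate cases $\max(\deg P, \deg Q) < n$ separately: if $z$ divides both $P$ and $Q$ one takes $A_n = \id$, and otherwise one picks any unit $(\alpha, \beta)$ annihilating the constant term of the bottom entry; in either case the length drops by one without obstruction, and the induction closes.
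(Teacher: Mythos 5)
Your proof is correct and is, in substance, the same argument the paper uses: the paper packages the inductive degree-reduction step as a statement about simultaneously triangularizing the coefficient matrices $\Gamma_0$ and $\Gamma_{n-1}$ (Lemmas~\ref{thm:simultaneous-triangularization} and~\ref{thm:linear-reduction}, specialized to $2\times 2$), and the key orthogonality $\bar{p}_n p_0 + \bar{q}_n q_0 = 0$ that you extract from the $z^n$ coefficient of $|P|^2+|Q|^2$ is exactly the paper's condition $\braket{\gamma_0}{\gamma_n}=0$ making $\Gamma_0^\dag\Gamma_{n-1}$ lower triangular. Your $A_n$, whose rows are proportional to $\bra{\gamma_n}$ and its orthogonal complement, coincides with the triangularizing unitary $Q$ constructed there, so the two proofs differ only in presentation.
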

\begin{figure}
    \centering
    \input{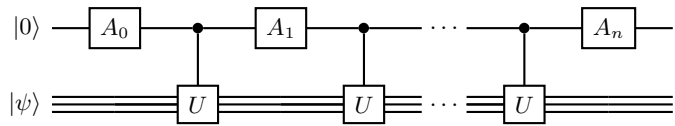}
    \caption{Ansatz for the single-qubit quantum eigenvalue transform, according to Theorem~\ref{thm:single-qubit-qsp-analytic}. If $U\ket{\psi} = z \ket{\psi}$, then we constructed the transformation $\ket{\psi} \rightarrow P(z) \ket{0} \ket{\psi} + Q(z) \ket{1} \ket{\psi}$, i.e., we are implementing a superposition of $P(U), Q(U)$.}
    \label{fig:one-qubit-qsp-ansatz}
\end{figure}

\noindent It is worth pointing out that $A_1, \ldots, A_n$ need not be arbitrary $SU(2)$ unitaries (but $A_0$ does): Since the signal operator is expressed in the $Z$-basis, we can, for example, restrict ourselves to operators $A$ such that $\tr(AX) = 0$ or $\tr(AY) = 0$ without hindering the generality of the theorem. This can also be seen in~\cite{motlaghGeneralizedQuantumSignal2023} where a simple trick (involving the $ZXZ$-decomposition) allows to parameterize the operators with only two angles, and not three. The circuit carrying out the so-called \emph{quantum eigenvalue transform} leveraging Theorem~\ref{thm:single-qubit-qsp-analytic} is depicted in Figure~\ref{fig:one-qubit-qsp-ansatz}~\cite{dongGroundStatePreparationEnergy2022,wangQuantumPhaseProcessing2023}. It is important to remark that a better-known version of QSP is a direct corollary.
\begin{corollary}
    \label{thm:single-qubit-qsp-laurent}
    Let $P, Q \in \C[z, z^{-1}]$ (Laurent polynomials in $z$). There exist $A_0, \ldots, A_n \in SU(2)$ such that
    \begin{align*}
        A_n \begin{bmatrix}
            z^{-1} & 0 \\
            0 & z
        \end{bmatrix}
        A_{n-1}
        \begin{bmatrix}
            z^{-1} & 0 \\
            0 & z
        \end{bmatrix}
        \cdots
        \begin{bmatrix}
            z^{-1} & 0 \\
            0 & z
        \end{bmatrix}
        A_0 \ket{0} =
        \begin{bmatrix}
            P(z) \\
            Q(z)
        \end{bmatrix}
    \end{align*}
    if and only if
    \begin{enumerate}[(i)]
        \item $\deg P, Q \le n$;
        \item $|P|^2 + |Q|^2 = 1$ for every $z \in U(1)$;
        \item $P, Q$ have parity $n \bmod 2$.
    \end{enumerate}
\end{corollary}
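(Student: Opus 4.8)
The plan is to obtain this as a genuine corollary of Theorem~\ref{thm:single-qubit-qsp-analytic} by a change of variable, rather than reproving QSP from scratch. The starting observation is an algebraic identity between the two signal operators: writing $\tilde w(\zeta) = \begin{bmatrix} 1 & 0 \\ 0 & \zeta \end{bmatrix}$ for the analytic signal operator evaluated at $\zeta$, one has
\begin{align*}
    \begin{bmatrix} z^{-1} & 0 \\ 0 & z \end{bmatrix} = z^{-1}\begin{bmatrix} 1 & 0 \\ 0 & z^2 \end{bmatrix} = z^{-1}\,\tilde w(z^2).
\end{align*}
In words, the Laurent signal operator is, up to the global scalar $z^{-1}$, the analytic signal operator read in the variable $\zeta = z^2$.

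First I would factor this scalar out of the whole product. Since there are exactly $n$ signal operators, pulling one factor of $z^{-1}$ from each gives, with the \emph{same} matrices $A_0,\dots,A_n$,
\begin{align*}
    A_n \begin{bmatrix} z^{-1} & 0 \\ 0 & z \end{bmatrix} \cdots \begin{bmatrix} z^{-1} & 0 \\ 0 & z \end{bmatrix} A_0 = z^{-n}\, A_n\, \tilde w(z^2) \cdots \tilde w(z^2)\, A_0 .
\end{align*}
The right-hand product, viewed as a function of $\zeta = z^2$, is precisely the analytic QSP circuit of Theorem~\ref{thm:single-qubit-qsp-analytic} (the substitution is legitimate because the $A_i$ do not depend on the signal variable). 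Applying it to $\ket 0$ therefore produces $\begin{bmatrix} \hat P(\zeta) \\ \hat Q(\zeta) \end{bmatrix}$ for some $\hat P, \hat Q \in \C[\zeta]$, and the corollary's target vector is recovered through the dictionary $P(z) = z^{-n}\hat P(z^2)$ and $Q(z) = z^{-n}\hat Q(z^2)$.

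The core of the argument is then to check that this dictionary is a bijection carrying the analytic conditions (i)--(ii) of the theorem onto the Laurent conditions (i)--(iii) of the corollary. In the forward direction, if $\hat P$ has degree $\le n$ then $z^{-n}\hat P(z^2) = \sum_{k=0}^{n} a_k z^{2k-n}$ is a Laurent polynomial whose exponents lie in $\{-n, -n+2, \dots, n\}$; this simultaneously yields Laurent degree $\le n$ (condition (i)) and parity $n \bmod 2$ (condition (iii)). The norm condition transports because $z^2 \in U(1)$ whenever $z \in U(1)$ and $|z^{-n}| = 1$, so $|P|^2 + |Q|^2 = |\hat P(z^2)|^2 + |\hat Q(z^2)|^2 = 1$ on the circle (condition (ii)). Running the dictionary backwards is equally direct: the parity hypothesis guarantees that $z^n P(z)$ contains only even powers of $z$, hence equals $\hat P(z^2)$ for a bona fide polynomial $\hat P$ of degree $\le n$ to which the theorem applies; both directions of the ``if and only if'' then follow at once.

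I expect the only real obstacle to be bookkeeping rather than any conceptual difficulty: one must verify that the $n$ scalar factors, the substitution $\zeta = z^2$, and the prefactor $z^{-n}$ conspire to produce exactly Laurent degree $\le n$ (and not $2n$), and that the parity accounting is consistent for $P$ and $Q$, which share the single parity $n \bmod 2$. A secondary point worth stating carefully is that $z \mapsto z^2$ is surjective onto $U(1)$, which is what lets the norm identity on the $\zeta$-circle be pulled back to the $z$-circle without loss.
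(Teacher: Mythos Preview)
Your proposal is correct and follows exactly the paper's approach: the paper's entire argument is the identity $\diag(z^{-1},z)=z^{-1}\diag(1,z^2)$ together with the substitution $\zeta=z^2$, which shifts the coefficient range from $\{-n,\dots,n\}$ to $\{0,\dots,2n\}$ and forces the parity condition. You have simply spelled out the dictionary $P(z)=z^{-n}\hat P(z^2)$ and the bookkeeping in more detail than the paper does.
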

\noindent This follows from the fact that $\diag(z^{-1}, z) = z^{-1} \diag(1, z^2)$, which causes a shift of the coefficients, from $\{-n, \ldots, n\}$ to $\{0, \ldots, 2n\}$, while the parity condition follows from the fact that Theorem~\ref{thm:single-qubit-qsp-analytic} is applied with $z^2$ instead of $z$. \revdtext{Indeed, by this argument if one is allowed access to both $\Tilde{w}, \Tilde{w}^\dag$ (i.e., to both $U, U^\dag$), it is possible to construct any Laurent polynomial (even of indefinite parity), by simply implementing a $2n$-degree analytic polynomial, and then shifting all the coefficients by calling $(U^\dag)^n$, and this operation will at most double the number of calls to the unitary.} We refer to the construction of Corollary~\ref{thm:single-qubit-qsp-laurent} as the \emph{Laurent picture}~\cite{dongGroundStatePreparationEnergy2022, haahProductDecompositionPeriodic2019,chaoFindingAnglesQuantum2020,wangQuantumPhaseProcessing2023, rossiMultivariableQuantumSignal2022}, while we use the term \emph{analytic picture} to denote the construction of Theorem~\ref{thm:single-qubit-qsp-analytic}~\cite{motlaghGeneralizedQuantumSignal2023}. \revdtext{For the reader more familiar with the traditional \emph{Chebyshev picture}~\cite{martynGrandUnificationQuantum2021} (from which the quantum singular value transformation can be obtained~\cite{gilyenQuantumSingularValue2019,tangCSGuideQuantum2023}) the connection can be made by setting $x = (z^{-1} + z)/2 \in [-1, 1]$, thus having:
\begin{align*}
    H \begin{bmatrix}
        z^{-1} & 0 \\
        0 & z
    \end{bmatrix} H & = 
    \begin{bmatrix}
        x & i \sqrt{1 - x^2} \\
        i \sqrt{1 - x^2} & x
    \end{bmatrix}
\end{align*}
where $H$ is the Hadamard gate.}
\section{\revdtext{Quantum signal processing over multiple qubits}}
\label{sec:multi-qubit-qsp}
\noindent We now consider the case where the control qubits are $b > 1$. Consider the following signal operator
\begin{align}
    \label{eq:multi-qubit-linear-signal-operator}
    \Tilde{w} = \diag(1, \ldots, 1, z, \ldots, z) = \diag(1, z) \otimes \id.
\end{align}
\noindent This operator only requires a single controlled version of the unitary $U$ whose transformation we want to carry out. Our first result shows that Theorem~\ref{thm:single-qubit-qsp-analytic} has a natural extension to multiple qubits.
\begin{theorem}
    \label{thm:multi-qubit-qsp-analytic-linear}
    Let $\{ P_x \}_{0 \le x < 2^b} \in \C[z]$. There exist $A_0, \ldots, A_n \in SU(2^b)$ such that:
    \begin{align*}
        A_n \Tilde{w} A_{n-1} \Tilde{w} \cdots \Tilde{w} A_0 \ket{0} = \sum_x P_x(z) \ket{x}
    \end{align*}
    if and only if:
    \begin{enumerate}[(i)]
        \item $\deg P_x \le n$ for every $0 \le x < 2^b$;
        \item $\sum_x |P_x(z)|^2 = 1$ for every $z \in U(1)$.
    \end{enumerate}
\end{theorem}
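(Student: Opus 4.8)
The plan is to mirror the single-qubit proof of Theorem~\ref{thm:single-qubit-qsp-analytic}, replacing the pair $(P,Q)$ by the full coefficient vector $\sum_x P_x(z)\ket{x}$, and replacing the two coordinate lines of $SU(2)$ by the two coordinate blocks $\calT$ and $\calB$ on which $\Tilde{w}$ acts as $\id$ and as $z\,\id$ respectively. The \emph{only if} direction is immediate: $A_0\ket{0}$ has entries of degree $0$, each $\Tilde{w}$ multiplies entries by $1$ or $z$ and so raises the degree by at most one, and each $A_i$ is a constant matrix that takes linear combinations without raising the degree; after $n$ signal operators every entry has degree $\le n$, giving (i). For $z \in U(1)$ the operator $\Tilde{w}$ is unitary, so the whole product is a unitary applied to the unit vector $\ket{0}$, whence $\sum_x |P_x(z)|^2 = \big\|\sum_x P_x(z)\ket{x}\big\|^2 = 1$, giving (ii).

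For the \emph{if} direction I would induct on $n$. When $n=0$ the $P_x$ are constants with $\sum_x |P_x|^2 = 1$, so $\sum_x P_x\ket{x}$ is a unit vector and any $A_0 \in SU(2^b)$ whose first column equals this vector works (the remaining columns are chosen to make the determinant $1$). For the inductive step, write $P_x(z) = \sum_{k=0}^n p_{x,k} z^k$ and collect the leading and constant coefficients into $\vec{u} = (p_{x,n})_x$ and $\vec{v} = (p_{x,0})_x$ in $\C^{2^b}$. Expanding $\sum_x|P_x(z)|^2$ on $U(1)$ (where $\bar z = z^{-1}$) as a Laurent polynomial $\sum_m c_m z^m$, condition (ii) forces $c_m = 0$ for $m \neq 0$; the coefficient of $z^n$ is exactly $\langle \vec{v}, \vec{u}\rangle$, so this vanishes and $\vec{u} \perp \vec{v}$.

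The crux is peeling off one layer. Let $\calT = \vspan{\ket{x} : x < 2^{b-1}}$ and $\calB = \vspan{\ket{x} : x \ge 2^{b-1}}$. Because $\vec{u} \perp \vec{v}$ and $\dim\calT = \dim\calB = 2^{b-1} \ge 2$, there is a unitary $A_n$ with $A_n^\dag \vec{u} \in \calB$ and $A_n^\dag \vec{v} \in \calT$: map the normalized vectors to orthonormal targets in the two blocks and extend to a full unitary (when $\vec{u}$ or $\vec{v}$ vanishes one drops the corresponding condition), and since $\ge 2^b-2 \ge 2$ dimensions remain free we may set $\det A_n = 1$. Writing $W = A_n^\dag \sum_x P_x\ket{x}$, the top block of $W$ then has vanishing $z^n$-coefficient, hence degree $\le n-1$, while the bottom block has vanishing constant term, hence is divisible by $z$. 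Thus $V' := \Tilde{w}^\dag W$ has all entries of degree $\le n-1$, and since $\Tilde{w},A_n$ are unitary on $U(1)$ its entries $P'_x$ still satisfy $\sum_x|P'_x(z)|^2 = 1$. The induction hypothesis supplies $A_0,\dots,A_{n-1} \in SU(2^b)$ with $A_{n-1}\Tilde{w}\cdots\Tilde{w}A_0\ket{0} = V'$, and then $A_n\Tilde{w}A_{n-1}\Tilde{w}\cdots A_0\ket{0} = A_n\Tilde{w}V' = A_n W = \sum_x P_x\ket{x}$, as required; constructing every $A_i$ directly in $SU(2^b)$ avoids any residual global phase.

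The main obstacle is the geometric step: showing that condition (ii) is exactly strong enough to guarantee the orthogonality $\vec{u}\perp\vec{v}$ that lets a single operator $A_n$ simultaneously lower the degree in the top block and cancel the constant term in the bottom block, so that one factor of $\Tilde{w}$ can be extracted. Everything else is routine bookkeeping, and it is worth noting that preservation of condition (ii) across the induction comes for free from unitarity of $\Tilde{w}$ and $A_n$, so at each step one only needs to extract the single identity $c_n = \langle \vec{v},\vec{u}\rangle = 0$ from (ii).
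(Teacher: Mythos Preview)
Your proof is correct and follows the same inductive skeleton as the paper: extract the orthogonality $\langle\gamma_0\,|\,\gamma_n\rangle=0$ from the normalization $\braket{\gamma}{\gamma}\equiv1$, use it to find a last operator $A_n$ so that $(A_n\Tilde w)^\dag\ket{\gamma}$ drops one degree, and recurse. The difference is in how $A_n$ is produced. The paper routes everything through its simultaneous-triangularization Lemma~\ref{thm:simultaneous-triangularization}: it pads $\ket{\gamma}$ to nominal degree $n+2^b-2$, observes that $\Gamma_0^\dag\Gamma_{n-1}$ is lower triangular (the only nontrivial entry being $\braket{\gamma_0}{\gamma_n}=0$), and then invokes Lemma~\ref{thm:exponential-reduction} with the full operator $\Tilde w_b$ to get a unitary that triangularizes both $\Gamma_0$ and $\Gamma_{n-1}$. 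You instead observe that for the specific signal operator $\Tilde w=\diag(1,z)\otimes\id$ only two conditions matter---the leading coefficient $A_n^\dag\ket{\gamma_n}$ must lie in the bottom block and the constant coefficient $A_n^\dag\ket{\gamma_0}$ in the top block---and these are immediate from the single orthogonality relation without any matrix triangularization. Your argument is therefore more elementary and self-contained for Theorem~\ref{thm:multi-qubit-qsp-analytic-linear} on its own; the paper's detour through Lemmas~\ref{thm:simultaneous-triangularization} and~\ref{thm:exponential-reduction} is the price it pays for a unified machinery that also covers the exponential-step operator $\Tilde w_b$ and Theorem~\ref{thm:multi-qubit-qsp-analytic-one-length}. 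One minor quibble: your dimension count ``$\ge 2^b-2\ge 2$ dimensions remain free'' for fixing $\det A_n=1$ fails at $b=1$, but the conclusion still holds there because the phases of the columns of $A_n$ remain free even after the block constraints are imposed.
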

\noindent Theorem~\ref{thm:multi-qubit-qsp-analytic-linear} tells that, just like in the single-qubit case, any normalized vector of $2^b$ polynomials can be implemented with $n$ steps, where $n$ is the maximum degree of any of the polynomials. For convenience, we introduce some definitions that will be useful throughout the work: a \emph{polynomial state} $\ket{\gamma(z)} = \sum_k \ket{\gamma_k} z^k$ is a parameterized vector that is a valid quantum state for every choice of $z \in U(1)$, i.e., $\braket{\gamma(z)}{\gamma(z)} \equiv 1$. The degree of $\ket{\gamma}$ is the highest $k$ such that $\ket{\gamma_k} \neq 0$. Thus, Theorem~\ref{thm:multi-qubit-qsp-analytic-linear} tells us that any polynomial state of degree $n$ (of any dimension) can be achieved with a $n$-length single-qubit QSP ansatz. Curiously enough, by looking at the proof one can see that the result does not depend on how many 1's and $z$'s are present in the diagonal of Equation~\ref{eq:multi-qubit-linear-signal-operator} (as soon as we have least one of each). This suggests that the non-uniqueness of the $A_k$'s implementing a given polynomial state is even more severe than in the single-qubit case. 

An important question arises about the efficient implementation of the signal processing operators $A_k$ which, for $b \gg 1$, is not even guaranteed to exist. For any constant $b$ (i.e., not depending on our input), $A_k$ has an efficient implementation by the Solovay-Kitaev theorem~\cite{nielsenQuantumComputationQuantum2010a,childsLectureNotesQuantum}. When $d = 2^b$ is polynomial in the input size, then it is possible to implement any unitary matrix (this involved a non-trivial construction, see Appendix~\ref{apx:polynomial-unitary-approximation}), thus concluding that QSP over polynomial dimension is implementable in polynomial time.

We state here some corollaries and additional results that will be useful in practice: if we consider the Laurent signal operator $\Tilde{v} = \diag(z^{-1}, z) \otimes \id$, we obtain a version of multi-qubit QSP in the Laurent picture.
\begin{corollary}
    \label{thm:multi-qubit-qsp-laurent-linear}
    Let $\{ P_x \}_{0 \le x < 2^b} \in \C[z, z^{-1}]$. There exist $A_0, \ldots, A_n \in SU(2^b)$ such that:
    \begin{align*}
        A_n \Tilde{v} A_{n-1} \Tilde{v} \cdots \Tilde{v} A_0 \ket{0} = \sum_x P_x(z) \ket{x}
    \end{align*}
    if and only if:
    \begin{enumerate}[(i)]
        \item $\deg P_x \le n$ for every $0 \le x < 2^b$;
        \item $\sum_x |P_x(z)|^2 = 1$ for every $z \in U(1)$;
        \item All $P_x$ have parity $n \bmod 2$.
    \end{enumerate}
\end{corollary}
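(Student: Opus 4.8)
The plan is to derive Corollary~\ref{thm:multi-qubit-qsp-laurent-linear} from Theorem~\ref{thm:multi-qubit-qsp-analytic-linear} by the same coefficient-shifting argument used to obtain Corollary~\ref{thm:single-qubit-qsp-laurent} from Theorem~\ref{thm:single-qubit-qsp-analytic}. The central observation is the algebraic identity
\begin{align*}
    \Tilde{v} = \diag(z^{-1}, z) \otimes \id = z^{-1} \bigl( \diag(1, z^2) \otimes \id \bigr),
\end{align*}
so that each Laurent signal operator equals $z^{-1}$ times an \emph{analytic} signal operator $\Tilde{w}$ with $z$ replaced by $z^2$. Since $z^{-1}$ is a scalar, it commutes with all the $A_k \in SU(2^b)$ and can be pulled out of the product; with $n$ factors of $\Tilde{v}$ the net scalar is $z^{-n}$. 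I would therefore write
\begin{align*}
    A_n \Tilde{v} A_{n-1} \Tilde{v} \cdots \Tilde{v} A_0 \ket{0} = z^{-n} \, A_n \Tilde{w}' A_{n-1} \Tilde{w}' \cdots \Tilde{w}' A_0 \ket{0},
\end{align*}
where $\Tilde{w}' = \diag(1, z^2) \otimes \id$ is the analytic signal operator evaluated at $z^2$.

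For the forward direction, I would apply Theorem~\ref{thm:multi-qubit-qsp-analytic-linear} in the variable $w = z^2$: the same $A_k$ produce analytic polynomials $\{\tilde P_x(w)\}_x$ of degree at most $n$ with $\sum_x |\tilde P_x(w)|^2 \equiv 1$ on $U(1)$. Substituting $w = z^2$ and multiplying by $z^{-n}$ gives $P_x(z) = z^{-n} \tilde P_x(z^2)$. Each such $P_x$ is a Laurent polynomial whose exponents lie in the range $\{-n, \ldots, n\}$ (the degree-$n$ bound of condition~(i)), all of the same parity $n \bmod 2$ because only even powers $z^{2k}$ appear before the shift (condition~(iii)); the normalization~(ii) is preserved since $|z^{-n}| = 1$ on $U(1)$. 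Conversely, given $\{P_x\}_x$ satisfying (i)–(iii), the parity condition guarantees that $z^n P_x(z)$ is a polynomial in $z^2$, so setting $\tilde P_x(w) := w^{n/2}\text{-shifted form}$ — more precisely, defining $\tilde P_x$ by $\tilde P_x(z^2) = z^n P_x(z)$ — yields genuine analytic polynomials of degree $\le n$ satisfying the two hypotheses of Theorem~\ref{thm:multi-qubit-qsp-analytic-linear}, to which I then apply the theorem to recover the $A_k$.

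The bookkeeping on exponents is the only place requiring care, and it is the step I expect to be the main (though minor) obstacle: I must check that conditions (i) and (iii) together are exactly equivalent to ``$z^n P_x(z)$ is a polynomial in $z^2$ of degree $\le 2n$.'' The parity hypothesis is what makes the map $P_x(z) \mapsto z^n P_x(z^2)^{-1}$-style substitution well-defined as an honest polynomial rather than a Laurent expression, and symmetrically the degree bound in the analytic picture translates to the two-sided bound $\{-n,\ldots,n\}$ in the Laurent picture. Once this equivalence is verified, both directions of the ``if and only if'' follow immediately from Theorem~\ref{thm:multi-qubit-qsp-analytic-linear}, since the signal processing operators $A_k$ are literally the same in both pictures and only the interpretation of the resulting polynomial vector changes under the substitution $w = z^2$ and the global phase $z^{-n}$.
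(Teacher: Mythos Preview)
Your proposal is correct and follows exactly the approach the paper intends: the paper states that this corollary ``is proven exactly like Corollary~\ref{thm:single-qubit-qsp-laurent},'' namely by writing $\Tilde{v} = z^{-1}\bigl(\diag(1,z^2)\otimes\id\bigr)$, pulling the scalar $z^{-n}$ through, and applying Theorem~\ref{thm:multi-qubit-qsp-analytic-linear} in the variable $z^2$, with the parity condition arising from the substitution and the degree bound from the coefficient shift. Your write-up is in fact more detailed than the paper's one-line justification.
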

\noindent This corollary is proven exactly like Corollary~\ref{thm:single-qubit-qsp-laurent}. Moreover, it is important to remark that, as in the single-qubit case, if we are allowed to use both $\Tilde{w}, \Tilde{w}^\dag$, we can remove condition (iii) by simply doubling the number of steps. The second result we extend from the single-qubit case involves completion of polynomial states.
\begin{theorem}
    \label{thm:polynomial-state-completion}
    Let $P_x(z)$ be $2^b - 1$ Laurent polynomials of degree at most $n$ such that $\sum_{x = 0}^{2^b - 2} |P_x|^2 \le 1$ on the unit circle. Then there exists a polynomial $P_{2^b}(z)$ such that $\sum_{x = 0}^{2^b - 1} |P_x(z)|^2 = 1$. The degree of this polynomial will be at most $n$.
\end{theorem}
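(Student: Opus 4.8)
The plan is to reduce the statement to the classical Fejér–Riesz theorem on nonnegative trigonometric polynomials, exactly mirroring the single-qubit completion (finding $Q$ from $P$ with $|P|^2+|Q|^2=1$) but with the single term $|P|^2$ replaced by the sum. First I would introduce the \emph{deficiency}
\begin{align*}
    R(z) = 1 - \sum_{x=0}^{2^b - 2} |P_x(z)|^2,
\end{align*}
regarded as a function on the unit circle, and check three properties. It is a Laurent polynomial: on $U(1)$ we have $\bar z = z^{-1}$, so for $P_x(z) = \sum_k c^{(x)}_k z^k$ the product $|P_x(z)|^2 = P_x(z)\,\overline{P_x(z)} = \sum_{j,k} c^{(x)}_j \overline{c^{(x)}_k}\, z^{j-k}$ is a Laurent polynomial, and since each $P_x$ has degree at most $n$ (support in $\{-n,\ldots,n\}$) the frequencies $j-k$ lie in $\{-2n,\ldots,2n\}$; hence so does the support of $R$. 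It is real on $U(1)$, because $\overline{|P_x|^2}=|P_x|^2$, which at the level of coefficients means $c_{-k}=\overline{c_k}$. And it is nonnegative on $U(1)$, which is precisely the hypothesis $\sum_x |P_x|^2 \le 1$.

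With these three properties in hand I would invoke the Fejér–Riesz theorem: any Laurent polynomial that is real and nonnegative on the unit circle with support in $\{-N,\ldots,N\}$ can be written as $|g(z)|^2$ for an analytic polynomial $g(z)=\sum_{k=0}^{N}a_k z^k$ of degree at most $N$. Applying it with $N=2n$ yields a $g$ of degree at most $2n$ with $R=|g|^2$ on $U(1)$.

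The final step handles the degree bookkeeping, which is the only place the spurious factor of two must be tamed. Setting
\begin{align*}
    P_{2^b - 1}(z) := z^{-n} g(z),
\end{align*}
we obtain a Laurent polynomial supported on $\{-n,\ldots,n\}$, hence of Laurent degree at most $n$, and since $|z^{-n}|=1$ on $U(1)$ we have $|P_{2^b-1}(z)|^2 = |g(z)|^2 = R(z)$ there. Adding this term gives $\sum_{x=0}^{2^b - 1} |P_x(z)|^2 = \sum_{x=0}^{2^b-2}|P_x|^2 + R = 1$ on $U(1)$, which is the claim (the degenerate case $R\equiv 0$, i.e.\ $\sum_x|P_x|^2\equiv 1$ already, is completed trivially by $P_{2^b-1}=0$).

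I expect the only genuine obstacle to be the invocation (and, for a self-contained treatment, the proof) of Fejér–Riesz. The standard argument multiplies $R$ by $z^{2n}$ to obtain an ordinary polynomial of degree at most $4n$ that is conjugate-reciprocal thanks to the symmetry $c_{-k}=\overline{c_k}$ forced by realness; its roots then pair off as $\rho \leftrightarrow 1/\bar\rho$, nonnegativity on $U(1)$ forces roots on the circle to have even multiplicity, and one assigns to $g$ one root from each off-circle pair together with half the circle roots. Everything else in the argument—realness, nonnegativity, and the support/centering computation—is routine.
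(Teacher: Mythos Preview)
Your proposal is correct and follows essentially the same argument as the paper: define the deficiency $R(z)=1-\sum_x|P_x|^2$, apply Fej\'er--Riesz to obtain an analytic polynomial $g$ (the paper calls it $w$) of degree at most $2n$ with $|g|^2=R$, and shift by $z^{-n}$ to get a Laurent polynomial of degree at most $n$. Your write-up is more detailed---you explicitly verify that $R$ is real, nonnegative, and supported on $\{-2n,\ldots,2n\}$, and you sketch the root-pairing proof of Fej\'er--Riesz---but the route is identical.
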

\begin{proof}
    The polynomial $R(z) = 1 - \sum_{x = 0}^{2^b - 2} |P_x(z)|^2$ is non-negative on the unit circle by assumption. Therefore, by the Fej\'er-Riesz theorem~\cite{geronimoPositiveExtensionsFejerRiesz2004, hussenFejerRieszTheoremIts2021}, there exists an analytic polynomial $w(z)$ such that:
    \begin{align*}
        |w(z)|^2 = R(z)
    \end{align*}
    Since $R(z)$ is a Laurent polynomial of degree at most $2n$, then the $w(z)$ will have degree at most $2n$, and $P_{2^b - 1}(z) = z^{-n} w(z)$ will be a Laurent polynomial of degree at most $n$ satisfying the same condition.
\end{proof}

\subsection{\revdtext{Example: phase location}}
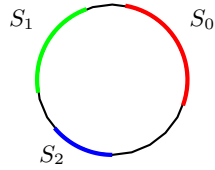
\begin{figure}
    \centering
    \begin{tikzpicture}
    \draw [black,thick,domain=0:360] plot ({cos(\x)}, {sin(\x)});
    \draw [red,ultra thick,domain=-20:80] plot ({cos(\x)}, {sin(\x)});
    \draw [blue,ultra thick,domain=220:270] plot ({cos(\x)}, {sin(\x)});
    \draw [green,ultra thick,domain=110:190] plot ({cos(\x)}, {sin(\x)});

    \node at (1.2, 0.8) {$S_0$};
    \node at (-1.2, 0.8) {$S_1$};
    \node at (-0.8, -1) {$S_2$};
\end{tikzpicture}
    \caption{Setting of the phase location problem. The input specifies $s$ arcs of the complex circle, with the promise that they are separated by `neutral zones' (the black arcs) of length $\ge \Delta$. The objective is to locate the eigenphase $\varphi$ in one of these intervals.}
    \label{fig:phase-location}
\end{figure}

\noindent Let $\{ S_j \}_j$ be a set of $s$ disjoint (and connected) subsets of the unitary circle $U(1)$. We are also guaranteed that these subsets are distant at least $\Delta$, i.e., if $x \in S_i, y \in S_j$ with $i \neq j$ then $|x - y|_{2\pi} \ge \Delta$~(see Figure~\ref{fig:phase-location}). Given a unitary $U$ as oracle and an eigenstate $\ket{\psi}$ with eigenvalue $e^{2\pi i \varphi}$, with the promise that $\varphi \in S_j$ for some $j$, we want to return the subset containing $\varphi$. Unlike in phase estimation, we do not want to recover full information about $\varphi$, but the subsets $S_j$ are not equal in size in general. Our final goal is to implement a vector of $k$ polynomials $P_j$, such that $|P_j|^2 \ge 1 - \epsilon$ within $S_j$. We use the following polynomials:
\begin{align*}
    R_{k, n}(z) = - \frac{2k e^{-k^2/2}}{\sqrt{\pi}} \left( I_0(k^2/2) \frac{z - z^{-1}}{2i} + \sum_{j=1}^{(n-1)/2} I_j(k^2/2) \left( \frac{z^{2j+1} - z^{-2j-1}}{2i (2j+1)} + \frac{z^{-2j-1} + z^{-(2j-1)}}{2i (2j-1)} \right) \right)
\end{align*}
where $I_k$ are the modified Bessel functions of the first kind.
\begin{theorem}[Corollary 6.27 in \cite{lowQuantumSignalProcessing2017}]
    \label{thm:sign-function-approximation}
    For a degree $n = \Theta(k \log \frac{1}{\epsilon})$ we have
    \begin{align*}
        \left|R_{k,n}(z) + \erf\left(k \frac{z - z^{-1}}{2i}\right)\right| \le \epsilon.
    \end{align*}
    As a direct corollary, for $k = \Theta(\frac{1}{\Delta} \log \frac{1}{\epsilon})$ we have
    \begin{align*}
        \left|R_{k,n}(e^{i 2\pi \varphi}) + \operatorname{sgn}\left(\sin 2\pi \varphi\right)\right| \le \epsilon \,\,\,\,\, \text{for $|\sin \varphi| \ge \Delta$}
    \end{align*}
\end{theorem}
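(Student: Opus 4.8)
The plan is to exploit that, on the unit circle $z=e^{i\theta}$, the argument of the error function becomes $\frac{z-z^{-1}}{2i}=\sin\theta$, so the first claim is equivalent to showing that $R_{k,n}(e^{i\theta})$ approximates $-\erf(k\sin\theta)$ uniformly in $\theta$. The key structural observation I would establish is that $R_{k,n}$ is exactly the degree-$n$ truncation of an \emph{exact} Fourier sine series for $\erf(k\sin\theta)$; once this is in hand, the entire bound reduces to controlling the discarded tail.

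To produce that series I would differentiate, $\frac{d}{d\theta}\erf(k\sin\theta)=\frac{2k}{\sqrt\pi}\cos\theta\,e^{-k^2\sin^2\theta}$, and rewrite the Gaussian via $\sin^2\theta=\tfrac12(1-\cos2\theta)$ together with the generating function of the modified Bessel functions, $e^{x\cos\phi}=\sum_{m\in\Z}I_m(x)e^{im\phi}$, taken at $x=k^2/2$, $\phi=2\theta$. This gives $e^{-k^2\sin^2\theta}=e^{-k^2/2}\big(I_0(k^2/2)+2\sum_{m\ge1}I_m(k^2/2)\cos2m\theta\big)$. Multiplying by $\cos\theta$, applying $2\cos\theta\cos2m\theta=\cos(2m+1)\theta+\cos(2m-1)\theta$, and integrating term by term (the constant vanishes since both sides are odd in $\theta$) yields
\begin{align*}
\erf(k\sin\theta)=\frac{2k\,e^{-k^2/2}}{\sqrt\pi}\left(I_0(k^2/2)\sin\theta+\sum_{m\ge1}I_m(k^2/2)\left(\tfrac{\sin(2m+1)\theta}{2m+1}+\tfrac{\sin(2m-1)\theta}{2m-1}\right)\right).
\end{align*}
Writing $\sin(j\theta)=\frac{z^j-z^{-j}}{2i}$ and truncating at $m=(n-1)/2$ reproduces $-R_{k,n}(z)$, so the approximation error is precisely the tail $\frac{2k\,e^{-k^2/2}}{\sqrt\pi}\sum_{m>(n-1)/2}I_m(k^2/2)\big(\tfrac{\sin(2m+1)\theta}{2m+1}+\tfrac{\sin(2m-1)\theta}{2m-1}\big)$.

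The main obstacle is bounding this tail. Using $|\sin|\le1$ and $\tfrac{1}{2m\pm1}\le1$, it is at most $\tfrac{4k}{\sqrt\pi}\sum_{m>(n-1)/2}e^{-k^2/2}I_m(k^2/2)$, so everything rests on a tail estimate for the modified Bessel functions at argument $x=k^2/2$. The cleanest route I know is to notice that $e^{-x}I_m(x)$ is exactly the Skellam probability mass function, i.e.\ the law of $N_1-N_2$ with $N_1,N_2\sim\mathrm{Poisson}(x/2)$ independent, so that $\sum_{m>M}e^{-x}I_m(x)=\Pr[N_1-N_2>M]$. A Chernoff bound built from the Poisson moment generating function gives a rate function that behaves quadratically for small deviations, hence sub-Gaussian decay $\lesssim e^{-M^2/(2x)}=e^{-((n-1)/2)^2/k^2}$ in the relevant regime $M\lesssim x$. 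Taking $n=\Theta(k\log\tfrac1\epsilon)$ (in fact a square root of that order already suffices, comfortably inside the stated bound) drives the tail, and thus the total error, below $\epsilon$.

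For the corollary I would invoke the elementary estimate $|\erf(y)-\operatorname{sgn}(y)|=\operatorname{erfc}(|y|)\le e^{-y^2}$. With $y=k\sin 2\pi\varphi$, the separation promise bounds $|\sin 2\pi\varphi|$ below by an amount of order $\Delta$, so $|y|\gtrsim k\Delta$ and this discrepancy is at most $e^{-\Omega(k^2\Delta^2)}\le\epsilon$ once $k=\Theta(\tfrac1\Delta\log\tfrac1\epsilon)$. Combining with the first part through the triangle inequality, $|R_{k,n}(e^{i2\pi\varphi})+\operatorname{sgn}(\sin 2\pi\varphi)|\le|R_{k,n}+\erf(k\sin2\pi\varphi)|+|\erf(k\sin2\pi\varphi)-\operatorname{sgn}(\sin2\pi\varphi)|$, bounds the whole quantity by $2\epsilon$, giving the claim after rescaling $\epsilon$. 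The only genuine delicacy is the Bessel/Skellam tail bound; the remaining steps are bookkeeping.
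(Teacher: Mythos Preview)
The paper does not actually prove this statement: it is quoted verbatim as Corollary~6.27 from \cite{lowQuantumSignalProcessing2017} and used as a black box. So there is no ``paper's own proof'' to compare against; your proposal stands on its own.

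Your argument is sound. The identification of $-R_{k,n}(e^{i\theta})$ with the degree-$n$ truncation of the Fourier sine series of $\erf(k\sin\theta)$ via the Jacobi--Anger expansion $e^{x\cos\phi}=\sum_{m}I_m(x)e^{im\phi}$ is exactly the right structural step, and the Skellam interpretation $e^{-x}I_m(x)=\Pr[N_1-N_2=m]$ with $N_1,N_2\sim\mathrm{Poisson}(x/2)$ is a clean way to convert the Bessel tail into a concentration bound. The Chernoff estimate you sketch indeed gives sub-Gaussian decay $\exp(-\Omega(M^2/x))$ in the regime $M\lesssim x$, and as you correctly observe this actually yields $n=\Theta(k\sqrt{\log(1/\epsilon)})$, comfortably inside the stated $\Theta(k\log(1/\epsilon))$. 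The corollary via $\operatorname{erfc}(|y|)\le e^{-y^2}$ and the triangle inequality is straightforward; the same slack appears there ($k=\Theta(\tfrac{1}{\Delta}\sqrt{\log(1/\epsilon)})$ would suffice). One cosmetic point: the hypothesis in the paper reads $|\sin\varphi|\ge\Delta$, which is almost certainly a typo for $|\sin 2\pi\varphi|\ge\Delta$; you have interpreted it in the sensible way.
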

\noindent Theorem~\ref{thm:sign-function-approximation} tells us that $R_{k,n}(z)$ approximates a square wave, being $-1$ in the upper half of the complex circle and $+1$ in the lower half (see Figure~\ref{fig:square-wave-approx}, left).
\begin{figure*}
    \centering

    \makeatletter
    \pgfmathdeclarefunction{erf}{1}{%
    \begingroup
        \pgfmathparse{#1 > 0 ? 1 : -1}%
        \edef\sign{\pgfmathresult}%
        \pgfmathparse{abs(#1)}%
        \edef\x{\pgfmathresult}%
        \pgfmathparse{1/(1+0.3275911*\x)}%
        \edef\t{\pgfmathresult}%
        \pgfmathparse{%
        1 - (((((1.061405429*\t -1.453152027)*\t) + 1.421413741)*\t 
        -0.284496736)*\t + 0.254829592)*\t*exp(-(\x*\x))}%
        \edef\y{\pgfmathresult}%
        \pgfmathparse{(\sign)*\y}%
        \pgfmath@smuggleone\pgfmathresult%
    \endgroup
    }
    \makeatother

    \begin{tikzpicture}
\begin{axis}[
    width=250pt,height=150pt,
    xmin=-50,xmax=410,
    ymin=-1.2,ymax=1.2,
    samples=50,
    xtick={0,180,360},
    xticklabels={$0$, $\pi$, $2\pi$},
    ytick={-1,0,1},
    yticklabels={$-1$, $0$, $1$},
    grid style={line width=.1pt, draw=gray!10},
    axis line style={latex-latex}]

    \addplot[red, ultra thick, domain=-50:410, samples=300] (x, {-0.98*erf(5*sin(x))});

    \draw [dashed] (axis cs:{-50},1) -- (axis cs:{410},1);
    \draw [dashed] (axis cs:{-50},-1) -- (axis cs:{410},-1);

    \draw[|<->|] (axis cs:{160},-0.7) -- node[below] {$\Delta$} (axis cs:{200},-0.7);

    \draw [dashed] (axis cs:{160},1) -- (axis cs:{160},-1);
    \draw [dashed] (axis cs:{200},1) -- (axis cs:{200},-1);
\end{axis}
\end{tikzpicture}
    \begin{tikzpicture}
\begin{axis}[
    width=250pt,height=150pt,
    xmin=-50,xmax=410,
    ymin=-1.2,ymax=1.2,
    samples=50,
    xtick={0,180,360},
    xticklabels={$0$, $\pi$, $2\pi$},
    ytick={-1,0,1},
    yticklabels={$-1$, $0$, $1$},
    grid style={line width=.1pt, draw=gray!10},
    axis line style={latex-latex}]

    \addplot[blue, ultra thick, domain=-50:410, samples=300] (x, {(1-0.98*erf(5*sin(x-10)))/2 * (1-0.98*erf(5*sin(x-40)))/2});

    \draw [dashed] (axis cs:{-50},1) -- (axis cs:{410},1);
    \draw [dashed] (axis cs:{-50},0) -- (axis cs:{410},0);

    \draw[|<->|] (axis cs:{200},-0.7) -- node[below] {$\frac{\Delta}{2}$} (axis cs:{230},-0.7);
    \draw [dashed] (axis cs:{200},1) -- (axis cs:{200},-1);
    \draw [dashed] (axis cs:{230},1) -- (axis cs:{230},-1);

    \draw[|<->|] (axis cs:{360},-0.7) -- node[below] {$\frac{\Delta}{2}$} (axis cs:{390},-0.7);
    \draw [dashed] (axis cs:{360},1) -- (axis cs:{360},-1);
    \draw [dashed] (axis cs:{390},1) -- (axis cs:{390},-1);

    \draw[|<->|] (axis cs:{230},-0.7) -- node[below] {$S_j$} (axis cs:{360},-0.7);


  
\end{axis}
\end{tikzpicture}
    \caption{On the left, the error function wave (as per Theorem~\ref{thm:sign-function-approximation}), approximated by the polynomial $R_{k,n}$. On the horizontal axis we have the eigenphase $2\pi \varphi$. This approximates well the ${-}1/{+}1$ square wave everywhere except in the $\Delta$-neighbourhoods of the integer multiples of $\pi$. On the right, the construction of $P_j$ for a given arc $S_j$, as per Eq.~(\ref{eq:phase-location-polynomial-construction}): we ensure that $P_j \neq 1$ for the whole $S_j$, and we use $\Delta/2$ more space (guaranteed by the promise on the input), to do the smooth transition between $1$ and $0$.}
    \label{fig:square-wave-approx}
\end{figure*}
Notice that this is the same polynomial used in amplitude amplification and phase estimation algorithms based on QSVT~\cite{martynGrandUnificationQuantum2021}. We then construct the polynomial as follows:
\begin{align}
    \label{eq:phase-location-polynomial-construction}
    P_j(z) = \frac{\sqrt{1 - \delta}}{1 + \epsilon} \cdot \frac{1 + R_{k,n}(z e^{-i a})}{2} \cdot \frac{1 + R_{k,n}(- z e^{-i (a + b)})}{2}.
\end{align}
This consists in the multiplication of two square waves oscillating between $0, 1$ (the factor $\frac{\sqrt{1 - \delta}}{2+\epsilon}$ will be needed to ensure $\sum_j |P_j|^2 \le 1$ while retaining the $\epsilon$ approximation error). $P_j$ will approximate a square wave being $+1$ on the interval $[a, b]$, and $0$ everywhere else (see Figure~\ref{fig:square-wave-approx}, right). Actually, Eq.~(\ref{eq:phase-location-polynomial-construction}) only gives polynomials $P_j(z)$ being $1$ for at most half of the period, i.e., the construction works only if $S_j$ covers at most half of the circle. Fortunately, at most one interval can cover more than half of the circle (which we call $S_0$ without loss of generality). We design $P_j(z)$ for all the other $S_j$, and then we find $P_0$ by completion using Theorem~\ref{thm:polynomial-state-completion}. By construction, $P_j$ satisfies the following:
\begin{align*}
    |P_j(z)|^2 & \le 1 - \delta \\
    |P_j(z) - 1| & \le \delta + 2\epsilon & \text{for $z \in S_j$} \\
    |P_j(z)| & \le \epsilon & \text{for $z \not\in S_j \pm \Delta/2$}
\end{align*}
We take $\delta = \epsilon'/4, \epsilon = \epsilon'/4s$, so that $|P_j|^2$ is away from $0$ or $1$ up to $\epsilon'$ and
\begin{align*}
    \sum_{j \neq 0} |P_j(z)|^2 & \le 1 - \frac{\epsilon'}{4} + (s - 1) \frac{\epsilon'}{4s} \le 1
\end{align*}
thus satisfying the condition for completion. Since $P_j(z)$ has degree $\bigO(\frac{1}{\Delta} \log\frac{1}{\epsilon})$, we can construct the polynomial state with $\bigO(\frac{1}{\Delta} \log \frac{s}{\epsilon'})$ QSP steps, i.e., $\bigO(\frac{s^3}{\Delta} \log \frac{s}{\epsilon'})$ total gates and $\bigO(s)$ additional qubits to obtain a $\epsilon'$ failure probability. Notice that this is significantly better than approaches using phase estimation in cases where the lengths of $S_j$'s are uneven. We remark that this problem could be tackled in principle with single-qubit QSP, but this needs a separate QSP protocol for each of the intervals, making the analysis (in particular, the error bound) more involved. Moreover, using a similar argument one can achieve an algorithm for phase estimation similar to the QSVT-based one in~\cite{martynGrandUnificationQuantum2021}, where instead of a binary search, we carry out a $2^b$-ary search, extracting $b$ bits of the phase at each step. This does not, however, improve the complexity of the original algorithm (unsurprisingly, since we would surpass the Heisenberg limit).

\section{\revdtext{Exponential steps}}
\label{sec:exponential-steps}

\noindent We now consider a different signal operator on $b$ qubits:
\begin{align*}
    \Tilde{w}_b = \diag(1, z^{2^{b-1}}) \otimes \cdots \otimes \diag(1, z^2) \otimes \diag(1, z) = \diag(1, z, z^2, \ldots, z^{2^b - 1})
\end{align*}
i.e., a diagonal containing the first $2^b$ powers of the eigenvalue $z$. By the first decomposition it is clear that we can implement $\Tilde{w}_b$ with controlled powers of $U^{2^k}$. This is particularly useful when $U^{2^k}$ is efficiently implementable (more than merely repeating calls to $U$), as the efficient implementations of high powers are a shortcut to high-degree polynomials (one call to $U^k$ already gives a $k$-degree polynomial). It is important to remark that this assumption surely does not hold for every unitary $U$: the no fast-forwarding theorem~\cite{atiaFastforwardingHamiltoniansExponentially2017} excludes such a general implementation scheme. There are, however, particular unitaries exploiting some algebraic structure, for example:
\begin{itemize}
    \item Let $f(x): \Z_N \rightarrow \Z_M$ be an efficiently computable function, and let $U$ be its phase kickback oracle, i.e.,
    \begin{align*}
        U \ket{x} = e^{2\pi i f(x)/M} \ket{x}
    \end{align*}
    This unitary is efficiently implementable with two copies of the circuit for $f$, and a quantum Fourier transform over $\Z_M$. An implementation of $U^{2^k}$ consists in replacing the circuit for $f(x)$ with a circuit for $2^k \cdot f(x)$ (i.e., shift the bits of the output by $k$).

    \item Let $G$ be a (non-necessarily abelian) group, whose operation $(x, y) \mapsto x \cdot y$ is efficiently computable. The unitary $U: \ket{x} \mapsto \ket{a \cdot x}$ is efficiently computable (here, $a \in G$ and $\ket{x}$ is intended to be some unique encoding of $x \in G$). Moreover, the unitary $U^{2^k}$ realizes the transformation
    \begin{align*}
        U^{2^k} \ket{x} = \ket{a^{2^k} \cdot x}
    \end{align*}
    which is also efficiently computable, by simply replacing $a$ with $a^{2^k}$ (encoded classically) in the circuit for $U$.
\end{itemize}
The second example, in particular, is the main property for the efficiency of algorithms solving the Hidden Subgroup Problem~\cite{childsLectureNotesQuantum,dewolfQuantumComputingLecture2019}.

Unfortunately, characterizing implementable polynomial states with this ansatz is arguably hard, since it is a special case of (a multi-qubit extension of) multivariable quantum signal processing (M-QSP)~\cite{rossiMultivariableQuantumSignal2022}. If the degree of a polynomial state does not exceed its dimension, we can nonetheless give a full characterization:
\begin{theorem}
    \label{thm:multi-qubit-qsp-analytic-one-length}
    Let $\ket{\gamma(z)} = \sum_k \ket{\gamma_k} z^k$ be a polynomial state of degree $\le 2^b - 1$. There exist $A_0, A_1 \in SU(2^b)$ such that
    \begin{align*}
        A_1 \Tilde{w}_b A_0 \ket{0} = \ket{\gamma(z)}
    \end{align*}
    if and only if $\{ \ket{\gamma_k} \}_k$ are pairwise orthogonal.
\end{theorem}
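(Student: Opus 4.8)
The plan is to reduce the factorization to a statement about how a single fixed unit vector is spread across the basis by the diagonal signal operator. First I would set $\ket{v} := A_0\ket{0} = \sum_k v_k \ket{k}$, which can be an arbitrary unit vector of $\C^{2^b}$ since $A_0$ ranges over the (special) unitaries, and I would denote the columns of $A_1$ by $\ket{a_k} := A_1\ket{k}$, which form an orthonormal basis by unitarity. Applying the three operators in order gives
\begin{align*}
    A_1 \Tilde{w}_b A_0 \ket{0} = A_1 \sum_k v_k z^k \ket{k} = \sum_k v_k z^k \ket{a_k},
\end{align*}
so the whole ansatz produces exactly the vector $\sum_k v_k z^k \ket{a_k}$. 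This one computation is the conceptual heart of the argument, and everything else is bookkeeping around it.

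For the necessity direction I would match this against $\ket{\gamma(z)} = \sum_k \ket{\gamma_k}z^k$. Since the monomials $\{z^k\}_{0\le k<2^b}$ are linearly independent as functions on $U(1)$, equality for all $z$ forces the coefficientwise identity $\ket{\gamma_k} = v_k\ket{a_k}$. Because the $\ket{a_k}$ are orthonormal, $\braket{\gamma_j}{\gamma_k} = \bar v_j v_k \braket{a_j}{a_k} = |v_k|^2\delta_{jk}$, so the $\ket{\gamma_k}$ are pairwise orthogonal, as required.

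For sufficiency I would run this identification in reverse. Assuming pairwise orthogonality, I set $v_k := \|\gamma_k\|$ and $\ket{a_k} := \ket{\gamma_k}/\|\gamma_k\|$ whenever $\ket{\gamma_k}\neq 0$; the normalization $\braket{\gamma(z)}{\gamma(z)}\equiv 1$, whose cross terms vanish precisely by orthogonality, yields $\sum_k v_k^2 = 1$, so $\sum_k v_k\ket{k}$ is a unit vector to be taken as $A_0\ket{0}$. The nonzero $\ket{a_k}$ are orthonormal by hypothesis, and — this is exactly where the degree bound $\le 2^b - 1$ enters — there are at most $2^b$ of them living in a $2^b$-dimensional space, so they extend to a full orthonormal basis; taking these as the columns of $A_1$ reproduces $\ket{\gamma(z)}$, with the terms where $\ket{\gamma_k}=0$ dropping out regardless of how their columns are completed.

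The only remaining point is to land in $SU(2^b)$ rather than $U(2^b)$, which I expect to be a mild technicality rather than a genuine obstacle: replacing $v_k \mapsto e^{i\phi_k}v_k$ and $\ket{a_k}\mapsto e^{-i\phi_k}\ket{a_k}$ leaves $\ket{\gamma_k}=v_k\ket{a_k}$ unchanged while letting me tune $\det A_1$ to $1$, after which I fix $\det A_0 = 1$ by composing $A_0$ with a unitary that fixes $\ket{0}$ and carries the required determinant on its $(2^b-1)$-dimensional orthogonal complement. Thus the achievable polynomial states of degree $\le 2^b-1$ are exactly those whose degree-$k$ coefficient vectors point along mutually orthogonal directions.
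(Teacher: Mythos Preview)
Your proof is correct and arrives at exactly the same construction the paper describes in the main text: $A_1$ has the normalized $\ket{\gamma_k}$ as its columns and $A_0\ket{0}=\sum_k\|\gamma_k\|\,\ket{k}$. The difference is in how the argument is packaged. The paper's formal proof in the appendix routes through its general machinery: it invokes Lemma~\ref{thm:exponential-reduction} (exponential reduction), which in turn rests on Lemma~\ref{thm:simultaneous-triangularization} (simultaneous triangularization), and then observes that for $n=2^b-1$ the matrix $\Gamma_0^\dag\Gamma_{n-(2^b-1)}=\Gamma_0^\dag\Gamma_0$ is Hermitian, so lower-triangular is equivalent to diagonal. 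Your approach bypasses all of that by directly expanding $A_1\Tilde{w}_b A_0\ket{0}$ and matching monomial coefficients, which is more elementary and self-contained for this one-step case; the paper's route has the advantage of fitting the result into the broader reduction framework that also handles the $n>1$ cases. One small remark: in your sufficiency step, the identity $\sum_k\|\gamma_k\|^2=1$ already follows from the constant coefficient of $\braket{\gamma(z)}{\gamma(z)}\equiv 1$ alone, so you do not actually need orthogonality there (though invoking it does no harm).
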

\noindent The proof of this claim can be found in Appendix~\ref{apx:qsp-multi-qubit-proof}, and it shows that $A_1$ is a unitary containing the normalized $\ket{\gamma_k}$ as columns, while $A_0$ prepares the state
\begin{align*}
    A_0 \ket{0} = \sum_k \sqrt{\braket{\gamma_k}{\gamma_k}} \ket{k}.
\end{align*}
In other words, given an analytic form of $\ket{\gamma(z)}$, Theorem~\ref{thm:multi-qubit-qsp-analytic-one-length} yields an analytic form for $A_0, A_1$. Notice the difference with general-length QSP, where the signal processing operators must be estimated numerically. The availability of a simple analytic form allows us to express elements of $SU(2^b)$ and, in many cases, a simple circuit is available to implement such transformations. We conclude by stressing that, although the existence of these `fast-forwardable' unitaries is an important motivation for this work, algorithm design can benefit from Theorem~\ref{thm:multi-qubit-qsp-analytic-one-length} even when it is necessary to call $U$ for $2^b$ times, as we shall see in the next section.

\subsection{Original phase estimation}
\label{sec:phase-estimation}
\noindent Phase estimation is an important problem in quantum computation: given a unitary $U$ (along with its powers $U^{2^k}$) as a black-box, and an eigenstate $\ket{\psi}$ such that $U \ket{\psi} = e^{2\pi i \varphi} \ket{\psi}$ for $\varphi \in [0, 1)$, it provides a $b$-bit approximation of $2^b \varphi$ with high probability. While algorithms for phase estimation leveraging QSVT are already known~\cite{martynGrandUnificationQuantum2021} (they essentially decompose the quantum Fourier transform into a binary search), here we want to show how the original phase estimation circuit comes out by itself with a simple argument involving Theorem~\ref{thm:multi-qubit-qsp-analytic-one-length}.

\noindent Assuming that $2^b \varphi \in \N$ for simplicity \revdtext{(a more robust algorithm is shown in the next section)}, in order to devise an algorithm for phase estimation using multiple-qubit QSP, we need to design a family of polynomials $\{ P_x(z) \}_x$ such that:
\begin{align*}
    P_x(e^{2\pi i \varphi}) = \delta_{x, 2^b \varphi} = \frac{1}{2^b} \sum_{k = 0}^{2^b - 1} \omega_{2^b}^{k (2^b \varphi - x)}
\end{align*}
which, by replacing $z = e^{2\pi i \varphi}$ becomes:
\begin{align*}
    P_x(z) = \frac{1}{2^b} \sum_{k = 0}^{2^b - 1} \omega_{2^b}^{-xk} z^k
\end{align*}
where $\delta_{\cdot,\cdot}$ is the Kronecker delta, and we simply used the expansion using the roots of unity $\omega_{2^b}^y$. By grouping the $2^b$ polynomials into the polynomial state $\ket{\gamma(z)}$, the coefficient vectors are
\begin{align*}
    \ket{\gamma_k} = \frac{1}{2^b} \sum_{y = 0}^{2^b - 1} \omega_{2^b}^{-xk} \ket{x}
\end{align*}
As $\braket{\gamma_k}{\gamma_\ell} = 0$ for $k \neq \ell$, following the proof of Theorem~\ref{thm:multi-qubit-qsp-analytic-one-length} we find $A_1, A_0$, where $A_1$ turns out to be the inverse quantum Fourier transform and $A_0$ is any unitary mapping $\ket{0}$ to the $b$-fold Hadamard state $\ket{+}^{\otimes b}$. This shows that phase estimation in its original form~\cite[p.~221]{nielsenQuantumComputationQuantum2010a} can be expressed in terms of quantum signal processing, where the presence of the quantum Fourier transform is a consequence (and not a cause) of the expansion of the Kronecker delta.

\subsection{\revdtext{Phase estimation with gaussian window}}
\noindent Here we show more robust algorithm for phase estimation, using tools from~\cite{chi-fangQuantumThermalState2023}. Consider the following gaussian wave function:
\begin{align*}
    \hat{\phi}(s) = \frac{1}{\sqrt[4]{2\pi \sigma^2}} e^{-s^2/4\sigma^2}
\end{align*}
\noindent It is well-known that the (inverse) Fourier transform is still a Gaussian~\cite{schumacherQuantumProcessesSystems2010}, more precisely:
\begin{align*}
    \phi(x) = \sqrt[4]{\frac{2\sigma^2}{\pi}} e^{-\sigma^2 x^2}
\end{align*}
\noindent In order to `encode' a gaussian in a quantum state, we introduce the following concept:
\begin{definition}
    \label{def:periodic-wrapping}
    Let $f : \R \rightarrow \C$ be a square-integrable function. Given $T > 0$, the $T$-wrapping of $f$ is defined as:
    \begin{align*}
        f^{(T)}(x) = \sum_{n \in \Z} f(x + nT)
    \end{align*}
\end{definition}
\noindent The concept of periodic wrapping introduced by Definition~\ref{def:periodic-wrapping} is simple: the first case we use is a $2\pi$-wrapping of the Gaussian, which can be seen as a wrapping around the complex circle where the phase to estimate lies~(see Figure~\ref{fig:gaussian-wrapping}).
\begin{figure}
    \centering
    \pgfmathdeclarefunction{gauss}{3}{%
  \pgfmathparse{1/(#3*sqrt(2*pi))*exp(-((#1-#2)^2)/(2*#3^2))}%
}

\begin{tikzpicture}
    \draw [black,thick,domain=0:360] plot ({cos(\x)}, {sin(\x)});

    \draw [black,ultra thick,domain=-180:180, samples=200] plot ({(1 + 3*gauss(\x/10, 7,1))*cos(\x)}, {(1 + 3*gauss(\x/10, 7,1))*sin(\x)});

    \draw [red,ultra thick,domain=-180:180, samples=200] plot ({(1 + 3*gauss(\x/10, 7,1))*cos(\x)}, {(1 + 3*gauss(\x/10, 7,1))*sin(\x)});

    \draw [black,thick,domain=0:360, samples=200] plot ({\x/80-10}, 0);
    \draw [red,ultra thick,domain=0:360, samples=200] plot ({\x/80-10}, {3*gauss(\x/10, 7,1)});
    \node at (-9.12, 0) {\textbullet};
    \node at (-9.12, -0.3) {$2\pi\varphi$};

    \node at (-10, 0) {\textbullet};
    \node at (-10, -0.3) {$0$};

    \node at ({36/8-10}, 0) {\textbullet};
    \node at ({36/8-10}, -0.3) {$2\pi$};

    \node at (0, 0) {\textbullet};
    \node at ({0.99*cos(69)}, {0.99*sin(69)}) {\textbullet};
    \node at ({0.75*cos(69)}, {0.75*sin(69)}) {$2\pi\varphi$};

    \node at (-3, 0) {$\Longrightarrow$};
\end{tikzpicture}
    \caption{Visualization of the periodic wrapping of a gaussian $\hat{\phi}(x - 2\pi\varphi)$ around the complex circle. The gaussian peaks at $x = 2\pi\varphi$. The $2\pi$-wrapping $\hat{\phi}^{(2\pi)}$ can be seen as the plot of $\hat{\phi}$ (on the whole real line) being wrapped around the complex circle. By the properties of the gaussian, we can take only a fixed period of length $2\pi$ centered in the peak, so that we can neglect the rest of the plot when doing the wrapping.}
    \label{fig:gaussian-wrapping}
\end{figure}
The ultimate goal will be to construct the state:
\begin{align}
    \ket{\bar{\phi}} \propto \sum_{k = 0}^{N - 1} \hat{\phi}^{(2\pi)}\left(\frac{2\pi k}{N} - 2\pi\varphi\right) \ket{k} \label{eq:final-gaussian-state}
\end{align}
\noindent where the (discretized) wrapped gaussian will peak when $k/N$ is around the true phase $\varphi$, even when $N\varphi$ is not an integer. The (continuous) inverse Fourier transform of $\hat{\phi}(s - 2\pi\varphi)$ is $\phi(x) e^{2\pi i x\varphi} = \phi(x) z^x$. This fact allows us to rewrite the state in (\ref{eq:final-gaussian-state}) as a polynomial state in the eigenvalue $z$. The only obstacle to this reasoning is the difference between discrete and continuous Fourier transform, which we overcome with the following result:
\begin{theorem}[Fact A.1 in \cite{chi-fangQuantumThermalState2023}, simplified]
    \label{thm:ft-discrete-continuous-duality}
    Let $f$ be a square integrable function, let $N$ be a positive integer, and fix two periods $T, W$ such that $TW = 2\pi N$. By defining $\hat{f}(p) = \frac{1}{\sqrt{2\pi}} \int_{\R} f(x) e^{-ipx} dx$ to be the continuous Fourier transform for $f$, the following holds:
    \begin{align*}
        \sqrt{T} \left[\frac{1}{\sqrt{N}} \sum_{y \in \Z_N} f^{(T)}\left( \frac{yT}{N} \right) \omega_N^{-ky}\right] = \sqrt{W} \left[ \hat{f}^{(W)} \left( \frac{k W}{N} \right) \right]
    \end{align*}
    for every $k \in \Z_N$.
\end{theorem}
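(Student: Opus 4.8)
The plan is to reduce the identity to the Poisson summation formula by first collapsing the outer sum over $\Z_N$ together with the wrapping sum into a single sum over $\Z$. Writing out the $T$-wrapping via Definition~\ref{def:periodic-wrapping}, the bracketed quantity on the left becomes
\begin{align*}
\frac{1}{\sqrt{N}} \sum_{y \in \Z_N} \sum_{n \in \Z} f\left(\frac{yT}{N} + nT\right) \omega_N^{-ky}.
\end{align*}
Since $\frac{yT}{N} + nT = \frac{(y + nN)T}{N}$ and $\omega_N^{-k(y+nN)} = \omega_N^{-ky}$ (because $\omega_N^{-knN} = 1$), the assignment $m = y + nN$ is a bijection from $\Z_N \times \Z$ onto $\Z$, and the double sum collapses to $\frac{1}{\sqrt{N}} \sum_{m \in \Z} f(m\delta)\, \omega_N^{-km}$ where $\delta = T/N$ is the sampling spacing.

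Next I would rewrite this single sum as a sampled Fourier-type sum. Using $TW = 2\pi N$ one has $2\pi/\delta = 2\pi N/T = W$ and $1/\delta = N/T$, and moreover $\frac{2\pi k m}{N} = \frac{kW}{N}\, m\delta$, so the character equals $\omega_N^{-km} = e^{-i (kW/N)(m\delta)}$. Setting $p_0 = kW/N$ and $g(x) = f(x)\, e^{-i p_0 x}$, the bare sum is exactly $\sum_{m \in \Z} g(m\delta)$, whose unnormalized continuous transform is $G(p) = \int_\R g(x) e^{-ipx}\,dx = \sqrt{2\pi}\, \hat{f}(p + p_0)$.

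The key step is to invoke the Poisson summation formula in the sampled form $\sum_{m \in \Z} g(m\delta) = \frac{1}{\delta} \sum_{n \in \Z} G(2\pi n/\delta)$. Substituting $G$ and using $2\pi/\delta = W$ together with $1/\delta = N/T$ gives
\begin{align*}
\sum_{m \in \Z} g(m\delta) = \frac{\sqrt{2\pi}\,N}{T} \sum_{n \in \Z} \hat{f}\!\left(\frac{kW}{N} + nW\right) = \frac{\sqrt{2\pi}\,N}{T}\, \hat{f}^{(W)}\!\left(\frac{kW}{N}\right),
\end{align*}
where the inner sum is recognized as the $W$-wrapping of $\hat{f}$ by Definition~\ref{def:periodic-wrapping}. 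Restoring the prefactor $\sqrt{T}/\sqrt{N}$ carried by the left-hand side, the constants cancel cleanly: $\frac{\sqrt{T}}{\sqrt{N}} \cdot \frac{\sqrt{2\pi}\,N}{T} = \frac{\sqrt{2\pi N}}{\sqrt{T}} = \sqrt{2\pi N/T} = \sqrt{W}$, which yields the claimed equality.

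I expect the main obstacle to be analytic rather than algebraic: the Poisson summation formula and the interchange of the two summations require justification, since for a merely square-integrable $f$ neither side need converge pointwise. I would therefore either state the result under mild decay hypotheses (for instance $f$ Schwartz, or $f$ and $\hat{f}$ both $O((1+|x|)^{-1-\epsilon})$), which already covers the Gaussian window needed in the application, or defer to the cited reference~\cite{chi-fangQuantumThermalState2023} for the precise regularity assumptions under which the sampled identity holds. Everything else is bookkeeping of normalizing constants, made transparent by the single relation $TW = 2\pi N$.
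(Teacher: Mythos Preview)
Your argument is correct: unwinding the $T$-wrapping into a single sum over $\Z$, recognizing the character $\omega_N^{-km}$ as $e^{-i(kW/N)(m\delta)}$ via $TW=2\pi N$, and then applying the sampled Poisson summation formula is exactly the right reduction, and your bookkeeping of the constants checks out line by line. Your caveat about regularity is also well placed; restricting to Schwartz (or mild decay) conditions is enough for the Gaussian application that follows.

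There is, however, nothing to compare against: the paper does not prove this statement at all. It is quoted as Fact~A.1 from~\cite{chi-fangQuantumThermalState2023} and used as a black box in the Gaussian-window phase estimation argument. So your write-up actually supplies a proof where the paper offers none; the Poisson-summation route you take is the standard one and is presumably what underlies the cited fact as well.
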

\noindent By applying Theorem~\ref{thm:ft-discrete-continuous-duality} we rewrite the state in Eq.~(\ref{eq:final-gaussian-state}):
\begin{align*}
    \sum_{k = 0}^{N - 1} \hat{\phi}^{(2\pi)}\left(\frac{2\pi k}{N} - 2\pi \varphi\right) \ket{k} & = \frac{1}{\sqrt{2\pi}} \sum_{k \in \Z_N} \sum_{y \in \Z_N} [\phi(y) z^y]^{(N)} \cdot \omega_N^{-ky} \ket{k} \\
    & = \frac{1}{\sqrt{2\pi}} \sum_{k \in \Z_N} \sum_{y \in \Z_N} \sum_{n \in \Z} \phi(y + nN) z^{y + nN} \cdot \omega_N^{-ky} \ket{k}.
\end{align*}
\noindent The above state involves all the powers of $z$ by the definition of periodic wrapping, but we can neglect all the terms that are too far away from the peak of the gaussian, giving us the state (after renormalization)
\begin{align*}
    \frac{1}{K \sqrt{N}} \sum_{k = 0}^{N - 1} \sum_{y = -N/2}^{N/2 - 1} \phi(y) z^y \cdot \omega_N^{-ky} \ket{k}
\end{align*}
where $K^2 = \sum_k \phi^2(k)$ is the normalization factor. More formally, the truncated state is only $\bigO(\sigma e^{-N^2 \sigma^2 / 2})$ far from the initial state in $2$-norm (by a gaussian tail bound). We can multiply by the global phase $z^{N/2}$ to obtain a polynomial state which satisfies the conditions of Theorem~\ref{thm:multi-qubit-qsp-analytic-one-length} (indeed, we can also implement the Laurent polynomial). We find that $A_1$ is again the quantum Fourier transform over $\Z_N$, while $A_0 \ket{0}$ has to construct the truncated gaussian state
\begin{align*}
    A_0 \ket{0} = \frac{1}{K} \sum_{y = -N/2}^{N/2 - 1} \phi(y) \ket{y}
\end{align*}
whose construction is already known to be feasible efficiently~\cite{mcardleQuantumStatePreparation2022}. This argument can in principle be used directly with other prior states, such as cosine~\cite{rendonEffectsCosineTapering2022} and Kaiser windows~\cite{greenawayCaseStudyQSVT2024, berryAnalyzingProspectsQuantum2024}.

\subsection{Discrete logarithm}
\noindent In the discrete logarithm problem, we are given a cyclic group $G = \langle g \rangle$ of order $N$, whose operation $(a, b) \mapsto a \cdot b$ we can easily compute. Given some group element $r \in G$, find the smallest integer $\ell$ such that $r = g^\ell$. Here we show how to recover (a variant of) Shor's algorithm for the discrete logarithm using multi-qubit QSP. We consider the following two unitaries:
\begin{align*}
    U \ket{a} & = \ket{g \cdot a} \\
    V \ket{a} & = \ket{r \cdot a} = U^\ell \ket{a}
\end{align*}
where $\ket{a}$ is represented as a $b$-bit string. Note that by a repeated square trick $U^{2^k}, V^{2^k}$ are efficiently computable. While there may be some $b$-bit strings not representing elements of $G$, we only consider the subspace spanned by the group elements from now on. By Lagrange's theorem, $U^N = \id$, and thus the eigenvalues of $U$ are the roots of unity $\omega_N^s$, for $s \in \Z_N$. Let $\ket{u_s}$ be the eigenstate associated with the eigenvalue $\omega_N^s$ and fix an arbitrary one for now. The goal is to find two values $x = s\ell, y = s$, so that we can compute the discrete logarithm $\ell = x \cdot y^{-1}$.

Thus we will have polynomials $P_{x,y}(z)$ indexed by two values. In order to obtain useful information about the discrete logarithm we design the following polynomials:
\begin{align*}
    P_{x,y}(\omega_N^s) & = \delta_{x, s\ell} \cdot \delta_{y, s} \\
    & = \left( \sum_{k = 0}^{N - 1} \omega_N^{k(s\ell - x)} \right) \cdot \left( \sum_{k = 0}^{N - 1} \omega_N^{k(s - y)} \right)
\end{align*}
and once again, by replacing $z = e^{2\pi i \varphi}$:
\begin{align*}
    P_{x,y}(z) & = \left( \sum_{k = 0}^{N - 1} \omega_N^{-kx} (z^\ell)^k \right) \cdot \left( \sum_{k = 0}^{N - 1} \omega_N^{-ky} z^k \right)
\end{align*}
Since these polynomials can be written as product of two separate families of polynomials (one involving $x$, and one involving $y$), this can be implemented through a tensor product of two QSP protocols using $U$ and $V = U^\ell$, respectively (indeed they are phase estimation polynomials so we already know that they satisfy the conditions of Theorem~\ref{thm:multi-qubit-qsp-analytic-one-length}). By noticing that
\begin{align*}
    \ket{u_s} = \frac{1}{\sqrt{N}} \sum_{k \in \Z_N} \omega_N^{-sk} \ket{g^k} \Longrightarrow \ket{1} = \frac{1}{\sqrt{N}} \sum_{s \in \Z_N} \ket{u_s}
\end{align*}
i.e., $\ket{1}$ is an equal superposition of $\ket{u_s}$, and we conclude that by inputting the identity element $\ket{1}$ we obtain $(s\ell, s)$ for a uniformly random $s$, which has an inverse in $\Z_N$ with probability $\Omega(\phi(N)/N) = \Omega(1/\log \log N)$ ($\phi(N)$ being Euler's totient function). The derived algorithm is depicted in the circuit of Figure~\ref{fig:discrete-log-circuit}.
\begin{figure}
    \centering
    \input{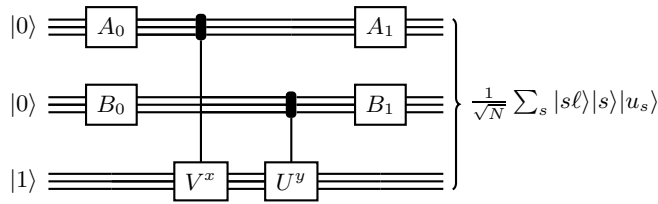}
    \caption{Circuit for the discrete logarithm over order-$N$ cyclic group $G = \langle g \rangle$ using two parallel QSP protocols. Here we are given $r = g^\ell$, and we want to find $\ell$. Here $V$ implements $a \mapsto y \cdot a$ and $U$ implements $a \mapsto g \cdot a$. Following Theorem~\ref{thm:multi-qubit-qsp-analytic-one-length} we find that $A_1, B_1$ are both inverse quantum Fourier transforms over $\Z_N$, while $A_0, B_0$ prepare the equal superposition over the elements of $G$.}
    \label{fig:discrete-log-circuit}
\end{figure}

\section{\revdtext{Remarks on the general case}}
\label{sec:general-case}

\noindent We give some final remarks on the general case $n, b > 1$ with exponential steps, to set up a path for future works. The definitions here are also relevant for the proofs of the results, which are summarized in Figure~\ref{fig:proof-diagram} and formally shown in Appendix~\ref{apx:qsp-multi-qubit-proof}.
\begin{definition}
    For a $m$-dimensional polynomial state $\ket{\gamma(z)} = \sum_k \ket{\gamma_k} z^k$, we define the square coefficient matrix $\Gamma_j$, whose columns are the $m$ coefficients vectors $\ket{\gamma_j}, \ket{\gamma_{j+1}}, \ldots, \ket{\gamma_{j+m-1}}$.
\end{definition}
\noindent Notice that if we apply a unitary $A$ (independent of $z$) to the polynomial state $\ket{\gamma(z)}$, by linearity we are simultaneously transforming all the coefficient vectors $\ket{\gamma_k} \mapsto A\ket{\gamma_k}$, and thus also $\Gamma_j \mapsto A \Gamma_j$. The crucial step in proving feasibility of construction for a given polynomial state $\ket{\gamma(z)}$ is to look for a so-called \emph{reduction}.
\begin{definition}
    A $m$-dimensional polynomial state $\ket{\gamma'(z)}$ is said to be a $\Tilde{w}$-reduction of the state $\ket{\gamma(z)}$ if there exists $A \in SU(m)$ such that
    \begin{align*}
        \ket{\gamma'(z)} = (A \Tilde{w})^\dag \ket{\gamma(z)}.
    \end{align*}
\end{definition}
\noindent This condition can be rewritten as $\ket{\gamma} = A \Tilde{w} \ket{\gamma'}$, i.e., $\ket{\gamma}$ can be constructed from $\ket{\gamma'}$ with an additional QSP step. Clearly we obtain a much different definition if we replace the single-step operator $\Tilde{w}$ with the one with exponential steps $\Tilde{w}_b$ defined in Section~\ref{sec:exponential-steps}.

In order to decompose a given $\ket{\gamma}$ into a QSP protocol, we want to look for reductions of lower degree. Each reduction will lower the degree until a zero-degree polynomial (i.e. a quantum state) is reached. Thus $A_0$ will produce this state, and the rest of the protocol is simply given by the reduction steps in reverse order. A first question is: when does a polynomial state of degree $n$ admit a reduction of lower degree? For this we have a definite and clear answer
\begin{lemma}
    \label{thm:exponential-reduction}
    The polynomial state $\ket{\gamma(z)}$ of degree $n$ has a reduction
    \begin{align*}
        \ket{\gamma'(z)} = (A \Tilde{w}_b)^\dag \ket{\gamma(z)}
    \end{align*}
    of degree $\le n - (2^b - 1)$ for some $A \in SU(2^b)$ if and only if $\Gamma_0^\dag \Gamma_{n - (2^b - 1)}$ is lower triangular.
\end{lemma}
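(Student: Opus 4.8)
The plan is to reformulate the existence of the reduction as a \emph{simultaneous triangularization} problem for the two coefficient matrices, and then to characterize its solvability through the Gram matrix $\Gamma_0^\dagger \Gamma_{n-(2^b-1)}$. Throughout I write $m = 2^b$ and let $\ket{a_x} = A\ket{x}$ denote the columns of $A$.

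First I would write out the reduction explicitly. A direct computation shows that the $z^j$-coefficient of the $x$-th component of $\ket{\gamma'(z)} = \Tilde{w}_b^\dagger A^\dagger \ket{\gamma(z)}$ equals $\braket{a_x}{\gamma_{j+x}}$, the factor $z^{-x}$ coming from $\Tilde{w}_b^\dagger = \diag(1,z^{-1},\dots,z^{-(m-1)})$ shifting the degrees component-wise. Since $(A\Tilde{w}_b)^\dagger$ is unitary on $U(1)$, $\ket{\gamma'}$ is automatically a polynomial state; the two genuine requirements are that it be \emph{analytic} (no negative powers) and of degree $\le n-(m-1)$. The analyticity requirement kills the coefficients with $j<0$, i.e. $\braket{a_x}{\gamma_k}=0$ for $k<x$, which says exactly that $A^\dagger\Gamma_0$ is upper triangular; the degree requirement kills the coefficients with $j>n-(m-1)$, i.e. $\braket{a_x}{\gamma_k}=0$ for $k>n-(m-1)+x$, which says exactly that $A^\dagger \Gamma_{n-(m-1)}$ is lower triangular. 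Thus the reduction exists if and only if there is a unitary $Q=A^\dagger$ with $Q\Gamma_0$ upper triangular and $Q\Gamma_{n-(m-1)}$ lower triangular. Pinning down this low/high double-triangular structure (in particular not forgetting the analyticity constraint, without which the statement would be vacuous) is the conceptual crux.

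Necessity is then immediate: writing $R=Q\Gamma_0$ (upper) and $L=Q\Gamma_{n-(m-1)}$ (lower), one gets $\Gamma_0^\dagger\Gamma_{n-(m-1)}=(Q\Gamma_0)^\dagger(Q\Gamma_{n-(m-1)})=R^\dagger L$, a product of two lower-triangular matrices, hence lower triangular. For sufficiency when $\Gamma_0$ is invertible I would take $Q$ from a QR decomposition $Q\Gamma_0=R$ with $R$ upper triangular and invertible; then $R^\dagger (Q\Gamma_{n-(m-1)}) = \Gamma_0^\dagger\Gamma_{n-(m-1)}=:M$ is lower triangular by hypothesis, so $Q\Gamma_{n-(m-1)}=(R^{-1})^\dagger M$ is again a product of two lower-triangular matrices, hence lower triangular, as required. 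A global phase on $A$ fixes $\det A=1$ so that $A\in SU(2^b)$, and the orthogonality conditions are phase-homogeneous and therefore untouched.

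The main obstacle is the singular case $\det\Gamma_0=0$, where the QR factor is no longer determined and the inversion above breaks. I would handle it by a density argument. The set $\mathcal{S}=\{\,G : G^\dagger \Gamma_{n-(m-1)}\text{ is lower triangular}\,\}$ is a linear subspace containing $\Gamma_0$; writing $h_c$ for the columns of $\Gamma_{n-(m-1)}$, the lower-triangular condition forces the $r$-th column of $G$ to lie in $\{h_{r+1},\dots,h_{m-1}\}^\dagger$, i.e. in a space of dimension $\ge r+1$, so a greedy choice along this increasing flag yields linearly independent columns and hence an invertible element of $\mathcal{S}$. Consequently $\mathcal{S}$ contains invertible matrices arbitrarily close to $\Gamma_0$; for a sequence $\Gamma_0^{(\epsilon)}\to\Gamma_0$ of such matrices the invertible case supplies unitaries $Q_\epsilon$, a subsequence of which converges (by compactness of the unitary group) to some $Q$, and since ``upper/lower triangular'' are closed conditions, $Q\Gamma_0$ is upper triangular and $Q\Gamma_{n-(m-1)}$ is lower triangular. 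This continuity step, together with the verification that $\mathcal{S}$ really meets the invertible matrices, is where the bulk of the technical care lies, and is precisely the triangular-matrix machinery of Appendix~\ref{apx:triangular-matrices}.
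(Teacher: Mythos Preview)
Your reformulation of the reduction as the simultaneous triangularization problem for $\Gamma_0$ and $\Gamma_{n-(m-1)}$ is exactly what the paper does, and your derivation of the two triangularity conditions from the analyticity and degree constraints matches the paper's argument (the paper states it via the shift picture of Figure~\ref{fig:two-qubit-shift}). The necessity direction is also identical.

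The genuine difference is in how you prove sufficiency of the simultaneous-triangularization lemma (the paper's Lemma~\ref{thm:simultaneous-triangularization}). The paper proceeds by induction on the dimension: it fixes the first and last columns of $Q$ as (normalizations of) $\ket{\gamma_0}$ and $\ket{\gamma_n}$, shows this already pins down the ``frame'' of both matrices, and then recurses on the inner $(m-2)\times(m-2)$ block, with separate bookkeeping for the degenerate cases $\ket{\gamma_0}=0$ or $\ket{\gamma_n}=0$. Your route instead uses the QR decomposition of $\Gamma_0$ when it is invertible, reads off the lower-triangularity of $Q\Gamma_{n-(m-1)}$ from $(R^{-1})^\dagger M$, and then handles singular $\Gamma_0$ by a density/compactness argument on the subspace $\mathcal{S}$. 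Both arguments are correct. The paper's inductive proof is fully constructive and gives an explicit recipe for $A$ (useful if one wants to compute the protocol), at the cost of some ad hoc case analysis when columns vanish. Your argument is slicker in the generic case and avoids the case split, but the perturbation/compactness step is nonconstructive, so it tells you a suitable $A$ exists without handing you one; if you care about actually building $A$ from $\ket{\gamma}$, the paper's approach is preferable.
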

\noindent Notice that from this lemma, we immediately deduce Theorem~\ref{thm:multi-qubit-qsp-analytic-one-length}, since $\Gamma_{n - (2^b - 1)} = \Gamma_0$, and $\Gamma_0^\dag \Gamma_0$ being lower triangular means that the columns $\{ \ket{\gamma_k} \}$ are pairwise orthogonal. A stronger result is possible if we are only looking for a reduction of one degree.
\begin{lemma}[Linear reduction]
    \label{thm:linear-reduction}
    Any polynomial state $\ket{\gamma(z)}$ of degree $n$ admits a reduction of degree $\le n - 1$.
\end{lemma}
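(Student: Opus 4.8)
Write the linear signal operator as $\Tilde{w} = \Pi_{\mathrm{top}} + z\,\Pi_{\mathrm{bot}}$, where $\Pi_{\mathrm{top}} = \proj{0}\otimes\id$ and $\Pi_{\mathrm{bot}} = \proj{1}\otimes\id$ project onto the blocks carrying the diagonal entries $1$ and $z$, so that $\Tilde{w}^\dag = \Pi_{\mathrm{top}} + z^{-1}\Pi_{\mathrm{bot}}$. Putting $B := A^\dag$, the plan is to expand the candidate reduction
\begin{align*}
    \ket{\gamma'(z)} = \Tilde{w}^\dag B \ket{\gamma(z)} = \sum_j \left( \Pi_{\mathrm{top}} B \ket{\gamma_j} + \Pi_{\mathrm{bot}} B \ket{\gamma_{j+1}} \right) z^j ,
\end{align*}
where the $z^{-1}\Pi_{\mathrm{bot}}$ part has been reindexed. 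Its extreme coefficients are $\Pi_{\mathrm{top}} B \ket{\gamma_n}$ at $z^n$ and $\Pi_{\mathrm{bot}} B \ket{\gamma_0}$ at $z^{-1}$, so $\ket{\gamma'(z)}$ is an analytic polynomial state of degree $\le n-1$ precisely when
\begin{align*}
    \Pi_{\mathrm{top}} B \ket{\gamma_n} = 0 \qquad \text{and} \qquad \Pi_{\mathrm{bot}} B \ket{\gamma_0} = 0 ,
\end{align*}
that is, when $B$ carries $\ket{\gamma_n}$ into the bottom block $\im\Pi_{\mathrm{bot}}$ and $\ket{\gamma_0}$ into the top block $\im\Pi_{\mathrm{top}}$. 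The entire lemma thus reduces to producing one such $B \in SU(2^b)$.

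The key input I would use is that these two demands are mutually compatible because $\ket{\gamma_0} \perp \ket{\gamma_n}$. Expanding the normalization gives
\begin{align*}
    1 = \braket{\gamma(z)}{\gamma(z)} = \sum_{k,\ell} \braket{\gamma_k}{\gamma_\ell}\, z^{\ell-k} \qquad \text{for all } z \in U(1),
\end{align*}
and reading off the coefficient of $z^n$ --- whose sole contribution with $0 \le k,\ell \le n$ is $(k,\ell)=(0,n)$ --- forces $\braket{\gamma_0}{\gamma_n}=0$. This is the many-qubit analogue of the single-qubit fact that $|P|^2+|Q|^2\equiv 1$ renders the top and bottom coefficient vectors orthogonal.

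Given this orthogonality, the construction is routine. The vector $\ket{\gamma_n}$ is nonzero since $\ket{\gamma}$ has degree $n$; if $\ket{\gamma_0}\neq 0$ too, normalize both to get an orthonormal pair $\hat\gamma_0\perp\hat\gamma_n$. Because each block $\im\Pi_{\mathrm{top}},\im\Pi_{\mathrm{bot}}$ has dimension $2^{b-1}\ge 1$, I would choose $B$ to send $\hat\gamma_0$ to a unit vector of the top block and $\hat\gamma_n$ to a unit vector of the bottom block; since the two images are orthogonal, this partial isometry extends to a full unitary, and a global phase can be fixed to enforce $\det B = 1$ without disturbing either condition. If instead $\ket{\gamma_0}=0$, the second condition is vacuous and it suffices to place $\hat\gamma_n$ in the bottom block. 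Taking $A=B^\dag$ delivers the reduction, and its normalization $\braket{\gamma'(z)}{\gamma'(z)}\equiv 1$ is inherited automatically, since $A\Tilde{w}$ is unitary for each $z\in U(1)$.

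The one genuinely delicate point is realizing both boundary conditions with a single unitary; everything else is bookkeeping. Its feasibility rests entirely on $\braket{\gamma_0}{\gamma_n}=0$ --- absent this, no unitary could route $\ket{\gamma_0}$ and $\ket{\gamma_n}$ into the two orthogonal blocks --- so the normalization hypothesis is exactly what powers the argument. I would close by noting that iterating this single-step reduction $n$ times forces the degree down to $0$, where the leftover constant state is prepared by $A_0$, thereby recovering the constructive direction of Theorem~\ref{thm:multi-qubit-qsp-analytic-linear}.
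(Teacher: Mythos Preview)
Your proof is correct. Both you and the paper hinge on the same key observation — that normalization $\braket{\gamma(z)}{\gamma(z)}\equiv 1$ forces $\braket{\gamma_0}{\gamma_n}=0$ by reading off the $z^n$ coefficient — but the two arguments diverge from there. The paper pads $\ket{\gamma}$ to fictitious degree $n+2^b-2$, recasts the orthogonality as $\Gamma_0^\dag\Gamma_{n-1}$ being lower triangular, and then invokes the exponential reduction Lemma~\ref{thm:exponential-reduction} (itself resting on the simultaneous-triangularization Lemma~\ref{thm:simultaneous-triangularization}) to extract the unitary $A$. You instead expand $\Tilde{w}^\dag B\ket{\gamma(z)}$ directly, identify the two boundary coefficients $\Pi_{\mathrm{top}}B\ket{\gamma_n}$ and $\Pi_{\mathrm{bot}}B\ket{\gamma_0}$, and build $B$ by hand as a unitary sending $\hat\gamma_0,\hat\gamma_n$ into the two orthogonal blocks. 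Your route is more elementary and self-contained — it needs neither the coefficient-matrix machinery nor the triangularization lemma — and it produces a $\Tilde{w}$-reduction directly, which is precisely what the proof of Theorem~\ref{thm:multi-qubit-qsp-analytic-linear} consumes. The paper's route has the compensating virtue of unifying the linear and exponential cases under one lemma, so that Lemma~\ref{thm:linear-reduction} becomes a corollary of the general Lemma~\ref{thm:exponential-reduction} rather than requiring a separate argument.
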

\noindent This result yields Theorems~\ref{thm:single-qubit-qsp-analytic} and~\ref{thm:multi-qubit-qsp-analytic-linear}, since $n$ reductions are sufficient to reduce any degree-$n$ polynomial state to a quantum state. The reason why Theorem~\ref{thm:multi-qubit-qsp-analytic-one-length} is limited compared to Theorem~\ref{thm:multi-qubit-qsp-analytic-linear} is because a polynomial state whose degree was exponentially reduced does not necessarily admit another exponential reduction (i.e., the condition of Lemma~\ref{thm:exponential-reduction} is not necessarily preserved by a reduction). Understanding which polynomials can be exponentially reduced multiple times remains an open question.
\begin{figure}
    \centering
    \begin{tikzpicture}
    \draw[align=center, thick] (0,0) rectangle (2.5,1) node[pos=.5] {Lemma~\ref{thm:simultaneous-triangularization}\\(Appendix~\ref{apx:triangular-matrices})};

    \draw[->, thick] (2.5, 0.3) -- (4, -0.5);
    \draw[->, thick] (2.5, 0.7) -- (4, 1.5);

    \draw[align=center,thick] (4,1) rectangle (8,2) node[pos=.5] {Linear reduction\\(Lemma~\ref{thm:linear-reduction})};
    \draw[align=center,thick] (4,-1) rectangle (8,0) node[pos=.5] {Exponential reduction\\(Lemma~\ref{thm:exponential-reduction})};

    \draw[->, thick] (8, 1.7) -- (10, 2);
    \draw[->, thick] (8, 1.3) -- (10, 0.5);
    \draw[->, thick] (8, -0.5) -- (10, -1);

    \draw[align=center,thick] (10,1.5) rectangle (14,2.5) node[pos=.5] {Single-qubit QSP\\(Theorem~\ref{thm:single-qubit-qsp-analytic})};
    \draw[align=center,thick] (10,0) rectangle (14,1) node[pos=.5] {Multi-qubit QSP\\(Theorem~\ref{thm:multi-qubit-qsp-analytic-linear})};
    \draw[align=center,thick] (10,-1.5) rectangle (14,-0.5) node[pos=.5] {QSP + exponential steps\\(Theorem~\ref{thm:multi-qubit-qsp-analytic-one-length})};
\end{tikzpicture}
    \caption{Diagram of the results in this work. Every theorem is proven using Lemma~\ref{thm:simultaneous-triangularization}, a result about simultaneously triangularizing two matrices, in Appendix~\ref{apx:triangular-matrices}. This allows to derive Theorems~\ref{thm:exponential-reduction},~\ref{thm:linear-reduction}.}
    \label{fig:proof-diagram}
\end{figure}
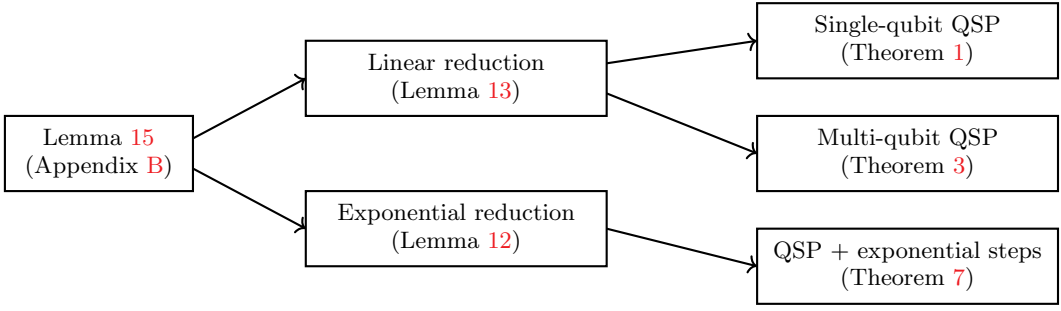

\section{\revdtext{Discussion}}
\label{sec:discussion}
\noindent In this work we considered variants and extensions of quantum signal processing involving multiple control qubits. The first claim shows that the result available in the single-qubit setting (stating that every vector of polynomials can be implemented) carries over to any dimension. As an example we showed how a problem called phase location can be solved by devising an algorithm in a convenient way, by using fewer gates than merely using phase estimation with sufficient accuracy. We then constructed a different signal operator $\Tilde{w}_b = \diag(1, z, z^2, \ldots, z^{2^b - 1})$ and characterized the polynomials obtainable with a single step of the new ansatz. This allowed us to (re-)derive phase estimation algorithms, with uniform and gaussian window states. In cases when the unitary $U$ to transform is easy to fast-forward (i.e., the cost of $U^k$ scales logarithmically with $k$), this allows a one-length QSP protocol to reach exponentially high degrees. A crucial example is Shor's algorithm and the class of HSP algorithms~\cite{childsLectureNotesQuantum}, whose special case for the discrete logarithm is explicitly shown. The algorithms are thus developed with a function-first approach, i.e., starting from the family of polynomials that we want to achieve (in the spirit of~\cite{rossiModularQuantumSignal2023}): as a consequence, the quantum Fourier transform in the examples of Section~\ref{sec:exponential-steps} emerged from the expansion of the Kronecker delta (and not the other way around).

The hardness of characterizing the arising polynomials for this ansatz is also due to the tight relation with multivariable quantum signal processing (M-QSP)~\cite{rossiMultivariableQuantumSignal2022}: indeed both ideas have a common extension (in the Laurent picture) involving multiple qubits and a signal operator embedding multiple variables such as
\begin{align*}
    \Tilde{w}(a, b) =
    \begin{bmatrix}
        a^{-1} & 0 \\
        0 & a
    \end{bmatrix}
    \otimes
    \begin{bmatrix}
        b^{-1} & 0 \\
        0 & b
    \end{bmatrix}
\end{align*}
while in~\cite{rossiMultivariableQuantumSignal2022} one forces all the control qubits (but one) to be classical, here we are giving up the simplicity of the single control qubit, but we return to the univariate case by making the restriction $a = z, b = z^2$. It is a possible future work to see if one can exploit the result on triangular matrices in Appendix~\ref{apx:triangular-matrices} to characterize implementable multivariable polynomials.

It is also important to remark that, while the QSVT~\cite{gilyenQuantumSingularValue2019} is usually referred to as the `multiple-qubit extension of quantum signal processing', it is fundamentally different from our construction: the focus of QSVT is usually the transformation of the block-encoded operator, and the control qubit is mainly used as a success flag indicating that the correct block is picked out (usually one is only interested in one of them, and the complementing polynomial is derived using the Fej{\'e}r-Riesz theorem~\cite{hussenFejerRieszTheoremIts2021}).

Perhaps a multiple-qubit quantum singular value transformation is possible (where singular values of a block-encoded matrix are transformed simultaneously by the $2^b$ polynomials in each subspace), but it requires a definition of multi-qubit QSP in the Chebyshev picture, which does not seem to be unique.

We conclude by highlighting a limitation of multi-qubit QSP: while $SU(2)$ operations are easy to implement, implementing general $SU(2^b)$ elements is not always possible efficiently when $b$ is super-logarithmic in the input. We nonetheless show that such protocol is efficient on polynomial dimensions, which is of theoretical interest. On the other hand, when using an ansatz with exponential steps, Theorem~\ref{thm:multi-qubit-qsp-analytic-one-length} allows to retrieve an analytical expression for the two signal processing operators, which in some cases allows to understand them enough to find efficient circuits to implement them, even on exponential dimension.
It is an open question, however, to characterize sub-algebras of the $b$-qubit algebra like in~\cite{chaoFindingAnglesQuantum2020}, containing only efficiently computable operators even for polynomial $b$.

\section*{Acknowledgements}
\noindent I would like to thank William Schober and Stefan Wolf for useful feedback, and Yuki Ito for a correction on the proof of Lemma~\ref{thm:simultaneous-triangularization}. LL acknowledges support from the Swiss National Science Foundation (SNSF), project No.~\texttt{200020\_214808}.

\bibliography{refs}

\appendix
\section{Approximating unitary transformations of polynomial size}
\label{apx:polynomial-unitary-approximation}

\noindent In order for multi-qubit QSP to be practically useful, one needs a systematic way to implement the signal processing operators $A_k$. Not surprisingly, approximating a unitary transformation from $SU(d)$ is much more difficult when $d > 2$. The Solovay-Kitaev theorem~\cite{nielsenQuantumComputationQuantum2010a} can certainly be useful when $d = \bigO(1)$, but when the number of qubits involved is polynomial in the input size, this problem is more general than quantum state preparation, which requires a number of steps polynomial in $d$ (i.e., exponential in the number of qubits)~\cite{sandersBlackBoxQuantumState2019,bauschFastBlackBoxQuantum2022,laneveRobustBlackboxQuantumstate2023}. When the dimension $d$ itself is polynomial in the input, however, we can use QSVT to implement any unitary $U$ efficiently.
\begin{theorem}
    Let $U$ be a $d \times d$ unitary matrix stored in a classical computer. The transformation $U$ can be implemented up to error $\epsilon$ in $2$-norm by a quantum circuit using $\bigO(d^3 \log(1/\epsilon))$
    gates.
\end{theorem}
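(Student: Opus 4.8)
The plan is to implement $U$ by first building an inefficient (sub-normalized) \emph{block encoding} of $U$ and then using the quantum singular value transformation~\cite{gilyenQuantumSingularValue2019, tangCSGuideQuantum2023} to amplify it back to a near-deterministic application. Concretely, I would construct a unitary $\mathcal{U}$ on $2\log d$ qubits ($\log d$ ancilla plus $\log d$ system qubits) whose top-left block equals $U/\alpha$ for some normalization $\alpha$, meaning $(\bra{0}_a \otimes \id)\,\mathcal{U}\,(\ket{0}_a \otimes \id) = U/\alpha$. Since the entries of $U$ are known classically, the two state-preparation subroutines required for a dense block encoding can be hard-coded; arbitrary state preparation on $2\log d$ qubits costs $\bigO(2^{2\log d}) = \bigO(d^2)$ gates~\cite{sandersBlackBoxQuantumState2019, laneveRobustBlackboxQuantumstate2023}, so $\mathcal{U}$ itself uses $\bigO(d^2)$ gates.

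The crucial quantitative point is the choice of $\alpha$. Because $U$ is unitary, each of its columns (and rows) is a unit vector, so by Cauchy--Schwarz its induced $1$-norm and $\infty$-norm are each at most $\sqrt{d}$, and the standard dense construction then yields normalization $\alpha = \sqrt{\|U\|_1 \|U\|_\infty} = \bigO(d)$. Exploiting the orthonormality of the columns to pin $\alpha$ down to $\bigO(d)$ (rather than a naive $\bigO(d^{3/2})$ or worse) is what ultimately keeps the final gate count cubic. Since all singular values of $U$ equal $1$, the encoded operator $U/\alpha$ has \emph{all} singular values equal to the single value $1/\alpha = \Theta(1/d)$.

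With this in hand, I would apply QSVT with an odd real polynomial $P$ performing uniform singular value amplification: $P$ maps the point $1/\alpha$ to within $\epsilon$ of $1$ while staying bounded by $1$ on $[-1,1]$. Writing the SVD $U = W\Sigma V^\dagger$ with $\Sigma = \id$, the transformation produces a block encoding of $W\,P(\Sigma/\alpha)\,V^\dagger = P(1/\alpha)\,WV^\dagger$, and since $WV^\dagger = U$ and $|P(1/\alpha) - 1|\le\epsilon$, the resulting block is $\epsilon$-close to $U$ in operator norm; the complementary off-diagonal blocks then have norm $\bigO(\epsilon)$, so applying $\mathcal{U}$ (with QSVT phases) and discarding the ancilla enacts $U$ deterministically up to error $\epsilon$. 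By Markov/Bernstein-type bounds a bounded polynomial cannot climb from $1/\alpha$ to $1$ with degree below $\Omega(\alpha)$, and the fixed-point amplification construction achieves degree $\bigO(\alpha\log(1/\epsilon)) = \bigO(d\log(1/\epsilon))$. As QSVT of degree $m$ invokes the block encoding $\bigO(m)$ times, the total cost is $\bigO(d\log(1/\epsilon))\cdot\bigO(d^2) = \bigO(d^3\log(1/\epsilon))$ gates.

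The hard part will be the block-encoding step, not the amplification: I expect the amplification to follow directly from existing uniform singular value amplification results once the single-point structure of the spectrum is noted. The genuine obstacle is to exhibit an explicit $\bigO(d^2)$-gate block encoding with the tight normalization $\alpha = \bigO(d)$ while handling complex entries, and to verify carefully that the amplified block is $\epsilon$-close to $U$ in the \emph{operator} norm (so that the protocol is deterministic) rather than merely succeeding with bounded probability. A secondary technical point worth checking is that the parity and boundedness constraints of QSVT are met by the amplifying polynomial, so that the phase factors realizing $P$ exist and the construction stays valid on the entire interval, not only at the isolated singular value $1/\alpha$.
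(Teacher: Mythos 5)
Your proposal is correct in outline but takes a genuinely different route from the paper. The paper never block-encodes $U$ itself: it classically computes a Hermitian $H$ with $U = e^{iH}$, block-encodes $H$ via the sparse-access construction (Lemma~48 of~\cite{gilyenQuantumSingularValue2019}), at cost $\bigO(d^3)$ including one round of amplification to remove the $1/d$ subnormalization, and then runs fully-coherent Hamiltonian simulation~\cite{martynEfficientFullyCoherentQuantum2023} at $t=1$, which makes $\bigO(\log(1/\epsilon))$ queries to that block encoding. You instead block-encode $U$ directly with subnormalization $1/\alpha$, $\alpha = \bigO(d)$, at $\bigO(d^2)$ gates per query, and pay the extra factor of $d$ in the degree of the amplifying polynomial rather than in the block-encoding step; the two bookkeepings multiply out to the same $\bigO(d^3 \log(1/\epsilon))$. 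Your route avoids the classical matrix logarithm and the Hamiltonian-simulation machinery, at the price of having to justify the single-point amplification yourself (a Yoder--Low--Chuang-type fixed-point polynomial of odd degree $\bigO(\alpha\log(1/\epsilon))$ does the job); the paper's route outsources exactly that difficulty -- coherent, near-deterministic implementation with only logarithmic $\epsilon$-dependence -- to an off-the-shelf theorem. Two small quantitative corrections to your argument: for unitary $U$ one has $\sqrt{\lVert U\rVert_1 \lVert U\rVert_\infty} \le \sqrt{d}$, so your $\bigO(d)$ is safe but loose; and if the amplified block $B$ satisfies $\lVert B - U\rVert \le \epsilon$ with $U$ unitary, the complementary block $C$ obeys $C^\dag C = \id - B^\dag B$ and hence has norm $\bigO(\sqrt{\epsilon})$, not $\bigO(\epsilon)$ -- repaired by running the protocol with $\epsilon^2$, which only doubles the logarithmic factor.
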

\begin{proof}
    We first classically compute $H$ with $\lVert H \rVert \le 1$ such that $U = e^{iH}$. We implement a block encoding of $H$ using encoding of sparse matrices~\cite[Lemma 48]{gilyenQuantumSingularValue2019}, so that implementing a block encoding of $H$ takes $\bigO(d^3)$ steps (this includes a round of amplitude amplification). We then implement $U$ using a fully-coherent algorithm for Hamiltonian simulation with $t = 1$~\cite{martynEfficientFullyCoherentQuantum2023}.
\end{proof}
\noindent We remark that the unitaries implemented with these methods are not exact, as we introduce an $\epsilon$ error. Nonetheless, each implementation only introduces an additive factor to the overall error in the QSP protocol, so by replacing $\epsilon \leftarrow \epsilon/(n+1)$ in the implementation of each signal processing operator, we will be able to construct the given polynomial state up to $\epsilon$ error.
\section{Triangular matrices}
\label{apx:triangular-matrices}

\noindent It is known that, if two matrices $A, B$ are both upper (lower) triangular, then also $AB$ is upper (lower) triangular. Here we are interested in the converse: given that $AB$ is upper (lower) triangular, can we conclude that $A$ and $B$ are upper (lower) triangular? The answer is yes, up to a common unitary transformation.
\begin{lemma}
    \label{thm:simultaneous-triangularization}
    Let $A, B$ be two $N \times N$ matrices. Then the product $A^\dag B$ is lower triangular if and only if there exists a unitary $Q$ such that $Q^\dag A$ is upper triangular and $Q^\dag B$ is lower triangular.
\end{lemma}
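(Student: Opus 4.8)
The plan is to prove the two implications separately, handling the nontrivial (``only if'') direction by induction on the dimension $N$.

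The reverse direction is immediate and I would dispatch it first: if $Q^\dag A =: R$ is upper triangular and $Q^\dag B =: L$ is lower triangular, then $A^\dag B = A^\dag Q Q^\dag B = (Q^\dag A)^\dag (Q^\dag B) = R^\dag L$ writes $A^\dag B$ as a product of the lower triangular matrix $R^\dag$ (the conjugate transpose of an upper triangular matrix) with the lower triangular matrix $L$, hence $A^\dag B$ is lower triangular.

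For the forward direction I would build the columns $q_1,\dots,q_N$ of $Q$ one at a time, reading the triangularity conditions as geometric constraints on the $q_i$. Writing $A_j,B_j$ for the columns of $A,B$, the matrix $Q^\dag A$ is upper triangular exactly when $q_i^\dag A_j = 0$ for $i>j$, and $Q^\dag B$ is lower triangular exactly when $q_i^\dag B_j = 0$ for $i<j$. The key step is choosing the last column $q_N$: the conditions indexed by $N$ force $q_N \perp \operatorname{span}(A_1,\dots,A_{N-1})$ and $B_N \in \operatorname{span}(q_N)$. The hypothesis that $A^\dag B$ is lower triangular supplies precisely $A_i^\dag B_N = 0$ for $i<N$, i.e. $B_N \perp \operatorname{span}(A_1,\dots,A_{N-1})$, so the two constraints are compatible: if $B_N \neq 0$ I set $q_N = B_N/\lVert B_N\rVert$ (automatically orthogonal to the $A_i$), and if $B_N = 0$ I pick any unit vector orthogonal to $\operatorname{span}(A_1,\dots,A_{N-1})$, which exists by a dimension count. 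Having fixed $q_N$, I reduce to an $(N-1)$-dimensional instance on $V = q_N^\perp$ via the projector $P = \id - q_N q_N^\dag$, setting $A_j' = P A_j$ and $B_j' = P B_j$ for $j<N$. Because $q_i \perp q_N$ for $i<N$, the surviving conditions with $i,j \le N-1$ are unaffected by replacing $A_j,B_j$ with their projections, and the crucial compatibility check is that $(A')^\dag B'$ is again lower triangular: expanding $(A_i')^\dag B_j' = A_i^\dag P B_j = A_i^\dag B_j - (A_i^\dag q_N)(q_N^\dag B_j)$ and using both $A_i^\dag B_j = 0$ (for $i<j$) and $A_i^\dag q_N = 0$ (since $q_N \perp A_i$ for $i \le N-1$) makes it vanish. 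The induction hypothesis applied to $A',B'$ on $V$ then yields $q_1,\dots,q_{N-1}$, completing $Q$.

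The main obstacle I anticipate is the degenerate case: when $A$ or $B$ is singular, the vectors $A_1,\dots,A_{N-1}$ may be linearly dependent and $B_N$ may vanish, so $q_N$ is no longer uniquely forced and one must confirm both that a valid choice always exists and that it preserves the inductive hypothesis. The resolution above — splitting on whether $B_N=0$ and verifying that projection onto $q_N^\perp$ keeps the Gram matrix $(A')^\dag B'$ lower triangular — is exactly what lets the argument avoid any invertibility assumption. The alternative route of treating the invertible case by a $QR$ decomposition of $A$ and then passing to the general case by a continuity/perturbation argument is more delicate, since a generic perturbation of $A$ or $B$ destroys the lower-triangular constraint on $A^\dag B$; the inductive construction sidesteps this entirely.
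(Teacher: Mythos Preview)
Your proof is correct and follows the same inductive strategy as the paper's. The only difference is cosmetic: the paper peels off \emph{two} columns of $Q$ per step (taking $\ket{u_0}\propto\ket{a_0}$ and $\ket{u_{N-1}}\propto\ket{b_{N-1}}$, then recursing on the inner $(N-2)\times(N-2)$ block), whereas you peel off only the last column $q_N$ and recurse on an $(N-1)$-dimensional subspace. Your one-column variant is arguably tidier --- there is a single degenerate case ($B_N=0$) instead of the paper's three ($\ket{a_0}=0$, $\ket{b_{N-1}}=0$, or both), and no separate $N=2$ base case is needed --- but the underlying mechanism is identical: the lower-triangularity of $A^\dag B$ supplies exactly the orthogonality relations that make the greedy choice of column(s) consistent with the remaining constraints.
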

\noindent It is already known that for any square matrix $A$ there always exists a unitary $Q$ such that $Q^\dag A$ is upper triangular (the so-called $QR$-decomposition~\cite{trefethenNumericalLinearAlgebra2022}), and a similar decomposition for lower triangular matrices. Lemma~\ref{thm:simultaneous-triangularization} gives a necessary and sufficient condition for two matrices having a common triangularizing matrix, which is similar in spirit to the fact that two matrices share an eigenbasis if and only if they commute.
\begin{proof}
    We first prove the easy direction: if such $Q$ exists then
    \begin{align*}
        A^\dag B = A^\dag Q Q^\dag B = (Q^\dag A)^\dag Q^\dag B
    \end{align*}
    which is a product of two lower triangular matrices.
    
    For the converse, let $\ket{a_k}, \ket{b_k}$ be the columns of $A, B$, respectively. We proceed by induction on $N$: the claim is trivial for $N = 1$, as any scalar is trivially both upper and lower triangular. \revdtext{For $N = 2$, $A^\dag B$ being lower triangular means that $\braket{a_0}{b_1} = 0$, and choosing $\ket{u_0} \propto \ket{a_0}, \ket{u_1} \propto \ket{b_1}$ as the columns of $Q$ will obtain the desired triangularization} (possibly completing the basis if zero vectors arise). For $N > 2$, we fix the first and the last column of $Q$, as in the case $N = 2$:
    \begin{align*}
        \ket{u_0} \propto \ket{a_0},\ \  \ket{u_{N-1}} \propto \ket{b_{N-1}}.
    \end{align*}
    Notice that these two vectors are orthogonal, since $A^\dag B$ is lower triangular. Moreover,
    \begin{align*}
        \bra{N-1} Q^\dag A \ket{j} & = \braket{u_{N-1}}{a_j} \propto \braket{b_{N-1}}{a_j} = 0 & \text{for $j < N - 1$} \\
        \bra{0} Q^\dag B \ket{j} & = \braket{u_0}{b_j} \propto \braket{a_0}{b_j} = 0 & \text{for $j > 0$} \\
        \bra{i} Q^\dag A \ket{0} & = \braket{u_i}{a_0} \propto \braket{u_i}{u_0} = 0 & \text{for $i > 0$} \\
        \bra{i} Q^\dag B \ket{N - 1} & = \braket{u_i}{b_{N-1}} \propto \braket{u_i}{u_{N-1}} = 0 & \text{for $i < N - 1$}
    \end{align*}
    The first two statements also follow from the fact that $A^\dag B$ is lower triangular, except in two limit cases: if only $\ket{a_0} = 0$, we will take $\ket{u_0}$ to be an arbitrary vector orthogonal to $\ket{b_j}$ for every $j > 0$ in order to enforce the second condition (and analogously if only $\ket{b_{N-1}} = 0$). \revdtext{If $\ket{a_0} = \ket{b_{N-1}} = 0$, we can take $\ket{u_0}$ as above, while we choose $\ket{u_{N-1}}$ to be an arbitrary vector orthogonal to both $\vspan{\ket{a_j} : j < N - 1}$ and $\ket{u_0}$ (this is possible because $\ket{a_0} = 0$, so $\vspan{\ket{a_j} : j < N - 1}$ has dimension at most $N - 2$)}. The other two conditions hold as long as we will choose the other columns of $Q$ to be orthogonal to $\ket{u_0}, \ket{u_{N-1}}$. By only choosing these two columns of $Q$, the \revdtext{four} conditions above fix the `frame' of the two matrices, i.e.,
    \begin{align*}
        Q^\dag A =
        \begin{bmatrix}
            * & \cdots & * & * \\
            0 & & & * \\
            \vdots & & &  \vdots \\
            0 & \cdots & 0 & *
        \end{bmatrix},
        \ \ \
        Q^\dag B =
        \begin{bmatrix}
            * & 0 & \cdots & 0 \\
            \vdots & & & \vdots \\
            * & & & 0 \\
            * & * & \cdots & *
        \end{bmatrix}.
    \end{align*}
    The inner blocks of the two matrices are of size $N-2 \times N-2$. We complete the unitary $Q$ with arbitrary columns and we call the inner blocks \revdtext{in this representation} $A', B'$. Now notice that, for $0 < i < j < N - 1$:
    \begin{align*}
        0 = \bra{i} A^\dag B \ket{j} = \bra{i} A^\dag Q Q^\dag B \ket{j} =
        \begin{bmatrix}
            * \\ A' \ket{i-1} \\ 0
        \end{bmatrix}^\dag
        \begin{bmatrix}
            0 \\ B' \ket{j-1} \\ *
        \end{bmatrix}
        = \bra{i-1} A'^\dag B' \ket{j-1}
    \end{align*}
    implying that $A'^\dag B'$ is lower triangular and, by induction, there exists a $N - 2 \times N - 2$ unitary matrix \revdtext{$Q'$} that triangularizes $A', B'$ as desired. The claim then holds since the unitary
    \begin{align*}
        Q
        \begin{bmatrix}
            1 &  & \\
            & Q' & \\
            & & 1
        \end{bmatrix}
    \end{align*}
    triangularizes $A, B$, as desired.

\end{proof}

\section{Proofs for general multi-qubit quantum signal processing}
\label{apx:qsp-multi-qubit-proof}

\noindent We want to use Lemma~\ref{thm:simultaneous-triangularization} that we proved in Appendix~\ref{apx:triangular-matrices} to derive results for multiple-qubit quantum signal processing. Following Figure~\ref{fig:proof-diagram}, the initial claim, from which we easily derive the rest of the results, is given by
\begin{lemmarestate}{\ref{thm:exponential-reduction}}[Exponential reduction]
    The polynomial state $\ket{\gamma(z)}$ of degree $n$ has a reduction
    \begin{align*}
        \ket{\gamma'(z)} = (A \Tilde{w}_b)^\dag \ket{\gamma(z)}
    \end{align*}
    of degree $\le n - (2^b - 1)$ for some $A \in SU(2^b)$ if and only if $\Gamma_0^\dag \Gamma_{n - (2^b - 1)}$ is lower triangular.
\end{lemmarestate}
\begin{proof}
    By Lemma~\ref{thm:simultaneous-triangularization}, $\Gamma_0^\dag \Gamma_{n - (2^b - 1)}$ is lower triangular if and only if there exists a unitary $A$ such that $A^\dag \Gamma_0$ is upper triangular and $A^\dag \Gamma_{n - (2^b - 1)}$ is lower triangular. Since $A^\dag \ket{\gamma(z)} = \sum_k A^\dag \ket{\gamma_k} z^k$, then we also transform $\Gamma_0 \rightarrow A^\dag \Gamma_0, \Gamma_{n - (2^b - 1)} \rightarrow A^\dag \Gamma_{n - (2^b - 1)}$ when we apply $A^\dag$ to $\ket{\gamma}$. Moreover, applying $\Tilde{w}_b^\dag$ transforms the polynomial entries:
    \begin{align*}
        \bra{x} \Tilde{w}^\dag_b A^\dag \ket{\gamma} & = z^{-x} \bra{x} A^\dag \ket{\gamma}
    \end{align*}
    i.e., $\Tilde{w}^\dag_b$ shifts the coefficient of the $x$-th polynomial by $x$ degrees to the left. If the first (last) coefficients of $\ket{\gamma}$ form an upper (lower) triangular matrix, this shift will result in all the new polynomials having degree $\le n - (2^b - 1)$ (see Figure~\ref{fig:two-qubit-shift} for a visual explanation in the case $b = 2$).
\end{proof}
\begin{figure}
    \centering
    \begin{tikzpicture}[scale=1.7]
    \definecolor{badsquares}{rgb}{0.49, 0.98, 1.0}

    \draw[decorate,decoration={brace,amplitude=4pt}] 
    (0,1) -- (2,1);
    \node at (1,1.3){$A^\dag \Gamma_0$};
    \draw[decorate,decoration={brace,amplitude=4pt}] 
    (4.5,1) -- (6.5,1);
    \node at (5.5,1.3){$A^\dag \Gamma_{n - 3}$};

    \node [anchor=center] at (-2, 0.75) {$A^\dag \ket{\gamma(z)}$};
    \node [anchor=center] at (-2, 0.25) {$P$};
    \node [anchor=center] at (-2, -0.25) {$Q$};
    \node [anchor=center] at (-2, -0.75) {$R$};
    \node [anchor=center] at (-2, -1.25) {$S$};

    \draw [dashed, anchor=south] (0,-1.7) -- (0,0.7) node {$0$};
    \draw [dashed, anchor=south] (5,-1.7) -- (5,0.7) node {$n - 3$};
    
    \foreach \x in {0, ..., 3} {
        \foreach \k in {0, ..., 3} {
            \def\sqColor{\ifnum\x>\k badsquares\else white\fi}
            \def\polyLetter{\symbol{\numexpr112+\x}}

            \draw[fill=\sqColor] ({0.5*\k}, {-0.5*\x}) rectangle ({0.5*\k+0.5}, {-0.5*\x+0.5}) node[midway] {$\polyLetter_\k$};
        }
    }

    \foreach \x in {0, ..., 3} {
        \foreach \k [evaluate=\k as \h using {int(3-\k)}] in {0, ..., 3} {
            \def\sqColor{\ifnum\x<\k badsquares\else white\fi}
            \def\polyLetter{\symbol{\numexpr112+\x}}

            \draw[fill=\sqColor] ({4.5+0.5*\k}, {-0.5*\x}) rectangle ({4.5+0.5*\k+0.5}, {-0.5*\x+0.5}) node[midway] {$\polyLetter_{\ifnum\h=0 n \else{n - \h}\fi}$};
        }
    }

    \node [anchor=center] at (3.25, -0.5) {...};
    \node [anchor=center] at (2.5, -2) {\large ${\downarrow}\ \  \Tilde{w}_2^\dag$};

\begin{scope}[shift={(0,-3)}]
    \node [anchor=center] at (-2, 0.75) {$\Tilde{w}_2^\dag A^\dag \ket{\gamma(z)}$};
    \node [anchor=center] at (-2, 0.25) {$P$};
    \node [anchor=center] at (-2, -0.25) {$Q$};
    \node [anchor=center] at (-2, -0.75) {$R$};
    \node [anchor=center] at (-2, -1.25) {$S$};

    \draw [dashed, anchor=south] (0,-1.7) -- (0,0.7) node {$0$};
    \draw [dashed, anchor=south] (5,-1.7) -- (5,0.7) node {$n - 3$};
    
    \foreach \x in {0, ..., 3} {
        \foreach \k in {0, ..., 3} {
            \def\sqColor{\ifnum\x>\k badsquares\else white\fi}
            \def\polyLetter{\symbol{\numexpr112+\x}}

            \draw[fill=\sqColor] ({0.5*\k-0.5*\x}, {-0.5*\x}) rectangle ({0.5*\k+0.5-0.5*\x}, {-0.5*\x+0.5}) node[midway] {$\polyLetter_\k$};
        }
    }

    \foreach \x in {0, ..., 3} {
        \foreach \k [evaluate=\k as \h using {int(3-\k)}] in {0, ..., 3} {
            \def\sqColor{\ifnum\x<\k badsquares\else white\fi}
            \def\polyLetter{\symbol{\numexpr112+\x}}

            \draw[fill=\sqColor] ({4.5+0.5*\k-0.5*\x}, {-0.5*\x}) rectangle ({4.5+0.5*\k+0.5-0.5*\x}, {-0.5*\x+0.5}) node[midway] {$\polyLetter_{\ifnum\h=0 n \else{n - \h}\fi}$};
        }
    }

    \node [anchor=center] at (2.5, -0.5) {...};
\end{scope}
\end{tikzpicture}
    \caption{Action of $\Tilde{w}_2^\dag$ on the coefficients of the four polynomials, where each column corresponds to a power of $z$. For example, since $\Tilde{w}_b^\dag$ maps $R(z) \rightarrow z^{-2} R(z)$, all the coefficients of $R$ are shifted by two positions to the left. If the coefficient matrices $A^\dag \Gamma_0$ and $A^\dag \Gamma_{n - (2^b - 1)}$ turn out to be upper and lower triangular, respectively, then all the coefficients colored in the figure will be $0$, and the resulting polynomial state will have degree $\le n - 3$.}
    \label{fig:two-qubit-shift}
\end{figure}
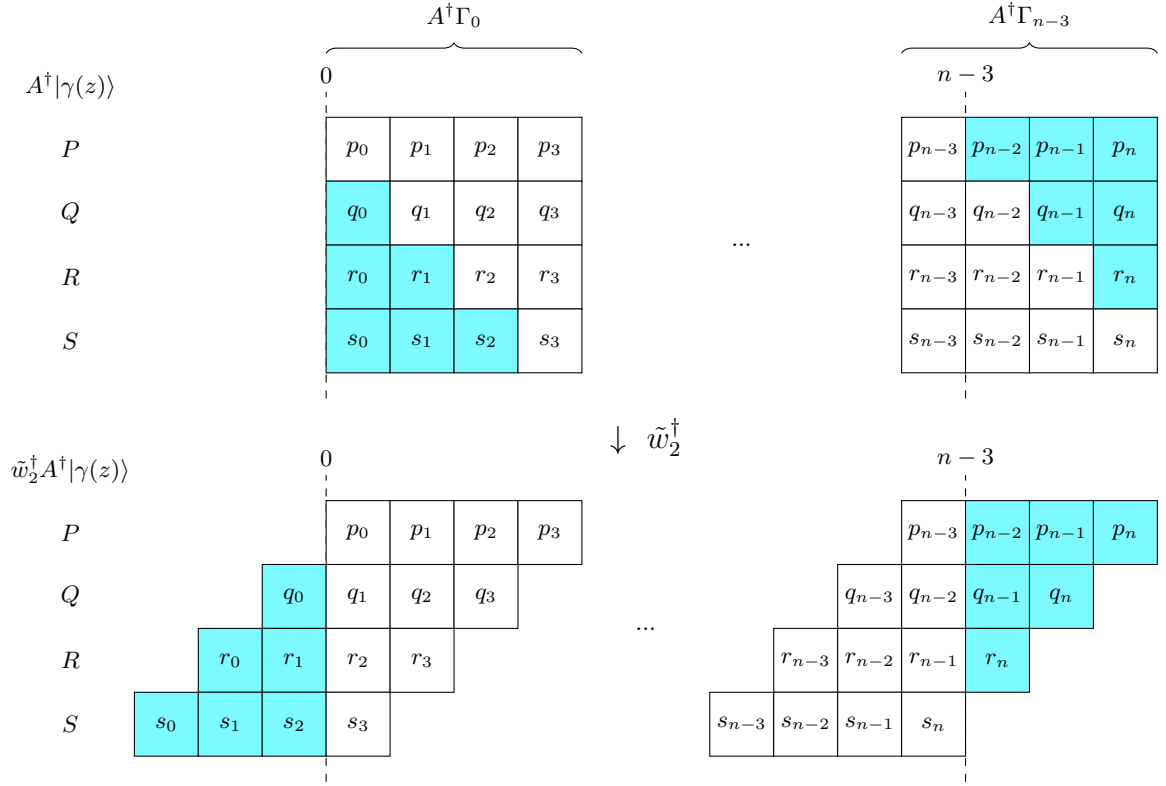
\noindent One can consider a polynomial state $\ket{\gamma}$ of degree $n$ to be of degree $n+d$ (whose last $d$ coefficients are zero), and then apply Lemma~\ref{thm:exponential-reduction} to $\Gamma_0$, $\Gamma_{n + d - (2^b - 1)}$. If $\Gamma_0^\dag \Gamma_{n + d - (2^b - 1)}$ is lower triangular, then Lemma~\ref{thm:exponential-reduction} gives us a unitary $A$ such that $w_b^\dag A^\dag \ket{\gamma}$ has degree $\le n + d - (2^b - 1)$. A surprising result comes as a corollary
\begin{lemmarestate}{\ref{thm:linear-reduction}}[Linear reduction]
    Any polynomial state $\ket{\gamma(z)}$ of degree $n$ admits a reduction of degree $\le n - 1$.
\end{lemmarestate}
\begin{proof}
    Consider $\ket{\gamma(z)}$ as a polynomial of degree $n + 2^b - 2$ (whose last $2^b - 2$ coefficients are zero).
    Its coefficient matrix $\Gamma_{n-1}$ has $\ket{\gamma_{n-1}}, \ket{\gamma_n}$ as the first two columns and zeroes everywhere else. By the definition of polynomial state
    \begin{align*}
        \braket{\gamma}{\gamma} = \sum_{k, \ell} \braket{\gamma_\ell}{\gamma_k} z^{k - \ell} = 1.
    \end{align*}
    Since this holds as a polynomial equation, every coefficient (except the zero-th one) has to be zero, in particular the $n$-th coefficient which gives $\braket{\gamma_0}{\gamma_n} = 0$. Therefore the product $\Gamma_0^\dag \Gamma_{n + 2^b - 2}$ satisfies:
    \begin{align*}
        \bra{0}\Gamma_0^\dag \Gamma_{n - 1}\ket{1} = \braket{\gamma_0}{\gamma_n} = 0
    \end{align*}
    along with the fact that all the other columns are zero (because they are zero in $\Gamma_{n-1}$), we conclude that $\Gamma_0^\dag \Gamma_{n-1}$ is lower triangular. By Lemma~\ref{thm:exponential-reduction}, this implies the existence of a unitary $A$ such that the reduction $\Tilde{w}^\dag_2 A^\dag \ket{\gamma}$ has degree $\le n + (2^b - 2) - (2^b - 1) = n - 1$.
\end{proof}
\noindent This argument is exactly the one used in usual proof of Theorem~\ref{thm:single-qubit-qsp-analytic} for single-qubit quantum signal processing~\cite{martynGrandUnificationQuantum2021, haahProductDecompositionPeriodic2019,motlaghGeneralizedQuantumSignal2023}, and it is also sufficient to prove
\begin{theoremrestate}{\ref{thm:multi-qubit-qsp-analytic-linear}}
    Let $\{ P_x \}_{0 \le x < 2^b} \in \C[z]$. There exist $A_0, \ldots, A_n \in SU(2^b)$ such that:
    \begin{align*}
        A_n \Tilde{w} A_{n-1} \Tilde{w} \cdots \Tilde{w} A_0 \ket{0} = \sum_x P_x(z) \ket{x}
    \end{align*}
    if and only if:
    \begin{enumerate}[(i)]
        \item $\deg P_x \le n$ for every $0 \le x < 2^b$;
        \item $\sum_x |P_x(z)|^2 = 1$ for every $z \in U(1)$.
    \end{enumerate}
\end{theoremrestate}
\begin{proof}
    For $n = 0$, the all the polynomials are constant and $\ket{\gamma}$ is a quantum state which can be implemented with $A_0 \ket{0}$ for a suitable choice of $A_0$ (also the sequence $\id, \ldots, \id, A_0$ implements the same polynomial state, for any $n$).
    For $n > 0$, by Theorem~\ref{thm:linear-reduction} there is a reduction $(A_n \Tilde{w})^\dag \ket{\gamma}$ of degree $n - 1$ for some $A_n \in SU(2^b)$. This reduction can be implemented with a $n - 1$ protocol $A_0, \ldots, A_{n-1}$, by induction. Hence, the sequence $A_0, \ldots, A_n$ implements $\ket{\gamma}$.
\end{proof}
\noindent While we proved that any polynomial state of degree $n$ can be implemented within $n$ steps, we can leverage the presence of higher powers of $z$ to reduce the degree by more than one during a step of the protocol, provided that some additional orthogonality conditions are given. The extremal case is when all the coefficient vectors are orthogonal.
\begin{theoremrestate}{\ref{thm:multi-qubit-qsp-analytic-one-length}}
    Let $\ket{\gamma(z)}$ be a $2^b$-dimensional polynomial state. Then there exist $A_0, A_1 \in SU(2^b)$ such that
    \begin{align*}
        A_1 \Tilde{w}_b A_0 \ket{0} = \ket{\gamma(z)}
    \end{align*}
    if and only if
    \begin{enumerate}[(i)]
        \item $P_x(z) := \braket{x}{\gamma(z)}$ has degree $\le 2^b - 1$ for every $0 \le x < 2^b$;
        \item $\sum_x |P_x(z)|^2 = 1$ for every $z \in U(1)$;
        \item $\Gamma_0^\dag \Gamma_0$ is diagonal, i.e., the coefficient vectors $\{ \ket{\gamma_k} \}_k$ are pairwise orthogonal.
    \end{enumerate}
\end{theoremrestate}
\begin{proof}
    If a polynomial state $\ket{\gamma}$ has degree $\le 2^b - 1$, then it can be reduced to a quantum state in a single step if and only if $\Gamma_0^\dag \Gamma_{2^b - 1}$ is lower triangular, by Lemma~\ref{thm:exponential-reduction}. Since $\Gamma_0^\dag \Gamma_{2^b - 1} = \Gamma_0^\dag \Gamma_0$, this matrix is also Hermitian and thus being lower triangular is equivalent to being diagonal. Conversely, $\ket{\gamma'} = \Tilde{w}_b A_0 \ket{0}$ has the $k$-th coefficient vector $\ket{\gamma'_k} \propto \ket{k}$ by construction, thus the coefficient vectors are pairwise orthogonal, and $A_1$ preserves this fact.
\end{proof}
\noindent From the proof one can see that the elements of the diagonal $\Gamma_0^\dag \Gamma_0$ are precisely the amplitudes squared of $A_0 \ket{0}$ (they sum up to $1$ because of condition~(ii)). In the meanwhile, the final $\ket{\gamma_k}$'s form an orthogonal basis which, if normalized, give the columns of $A_1$.

\end{document}